\documentclass{article}
\usepackage{graphicx}
\usepackage{amsfonts}
\usepackage{amsmath}
\usepackage{amssymb}
\usepackage{url}
\usepackage{bm}
\usepackage{overpic}

\usepackage{dsfont}
\usepackage{fancyhdr}
\usepackage{indentfirst}
\usepackage{enumerate}
\usepackage[normalem]{ulem}
\usepackage{mathrsfs}

\usepackage[colorlinks=true,citecolor=blue]{hyperref}
\usepackage{amsthm}
\usepackage{color}

\definecolor{DarkGreen}{rgb}{0.2,0.6,0.2}

\usepackage{natbib}
\usepackage{comment}

\addtolength{\textheight}{.5\baselineskip}
\def\d{\mathrm{d}}
\def\laweq{\buildrel \mathrm{d} \over =}

\newcommand{\E}{\mathbb{E}}

\newcommand{\R}{\mathbb{R}}
\newcommand{\RR}{\mathbb R\cup\{+\infty\}}

\newcommand{\N}{\mathbb{N}}
\newcommand{\p}{\mathbb{P}}
\newcommand{\X}{\mathcal{X}}
\newcommand{\Z}{\mathcal{Z}}

\newcommand{\G}{\mathcal{G}}

\newcommand{\id}{\mathds{1}}

\renewcommand{\ge}{\geqslant}
\renewcommand{\le}{\leqslant}

\renewcommand{\leq}{\leqslant}
\renewcommand{\epsilon}{\varepsilon}
\newcommand{\esssup}{\mathrm{ess\mbox{-}sup}}
\newcommand{\essinf}{\mathrm{ess\mbox{-}inf}}

\theoremstyle{plain}
\newtheorem{theorem}{Theorem}
\newtheorem{corollary}{Corollary}
\newtheorem{lemma}{Lemma}
\newtheorem{proposition}{Proposition}
\theoremstyle{definition}
\newtheorem{definition}{Definition}
\newtheorem{example}{Example}
\newtheorem{assumption}{Assumption}
\usepackage{setspace}

\theoremstyle{remark}
\newtheorem{remark}{Remark}


\topmargin -0.5cm \oddsidemargin 0.66cm \evensidemargin 0.66cm
\textwidth 14.66cm \textheight 22.23cm

\parindent 5ex

\newcommand{\VaR}{\mathrm{VaR}}

\newcommand{\ES}{\mathrm{ES}}

\setlength{\bibsep}{1pt}
\usepackage{footmisc}
\setlength{\footnotesep}{0.8\baselineskip}
\begin{document}

\title{Robustness in the Optimization of Risk Measures}

\author{Paul Embrechts\thanks{RiskLab, Department of Mathematics, and ETH Risk Center,  ETH Zurich, 8092 Zurich, Switzerland. Email: \url{embrechts@math.ethz.ch}}\and Alexander Schied\thanks{Department of Statistics and Actuarial Science, University of Waterloo,   Waterloo, ON N2L3G1, Canada. Email: \url{aschied@uwaterloo.ca}}\and  Ruodu Wang\thanks{Department of Statistics and Actuarial Science, University of Waterloo,   Waterloo, ON N2L3G1, Canada.
Email: \url{wang@uwaterloo.ca}.
 }}

 \date{}

\maketitle

\begin{abstract}
We study issues of robustness in the context of Quantitative Risk Management and Optimization. We develop a general methodology for determining whether a given risk measurement related optimization problem is robust, which we call ``robustness against optimization". The new notion is studied for various classes of risk measures and expected utility and loss functions. Motivated by practical issues from financial regulation, special attention is given to the two most widely used risk measures in the industry, Value-at-Risk (VaR) and Expected Shortfall (ES). We establish that for a class of general optimization problems, VaR leads to non-robust optimizers whereas convex risk measures generally lead  to robust ones.  Our results offer extra insight on the   ongoing discussion about the comparative advantages of VaR and ES in banking and insurance regulation. Our notion of robustness is conceptually different from the field of robust optimization, to which some interesting links are derived.
%

\textbf{Keywords}: robustness, Value-at-Risk, Expected Shortfall, optimization, financial regulation 
   \end{abstract}

   \newpage

\begin{center}
 {When a measure becomes a target, it ceases to be a good measure}.
 \\
 \hfill -- \emph{Goodhart's law, paraphrased by \cite{S97}}
\end{center}

\section{Introduction}

The main focus of this paper is the study of robustness properties of optimization procedures within Quantitative Risk Management (QRM).
For this, we introduce a novel general framework, which at the same time is conceptually intuitive and mathematically challenging.
A key and, as we will highlight in the paper, novel question concerns the influence of the choice of the underlying objective on the resulting robustness properties in risk optimization.
 In particular, we
are interested in the two most popular regulatory risk measures,
the Value-at-Risk (VaR) and the Expected Shortfall (ES), and their robustness properties in the context of risk optimization.

In QRM,  the concept of robustness for risk measures is traditionally studied at the level of objective functionals without involving optimization problems; see \cite{CDS10}, \cite{KPH13}, \cite{KSZ14, KSZ17}, \cite{EWW15}, and the references therein.
In the literature on robust optimization (see e.g.~\cite{BTGN09}),  model uncertainty is typically incorporated through modifying the objective functional or the constraints.

The paper \cite{CDS10}  compares the qualitative robustness of VaR and coherent risk measures; the authors conclude that VaR is better in their context.
Some later papers, e.g.~\cite{EWW15} and \cite{KSZ14, KSZ17},  put the corresponding arguments into a different perspective, showing that ES also has certain desirable  robustness properties.
Both streams of research assumed that both VaR and, say, ES are applied to the same financial position. In reality, however, the regulatory choice of a particular risk measure creates certain incentives, just like any other aspect of regulation. 
These incentives become effective even before that risk measure has ever been applied in a risk management context. For instance, once a specific risk measure has been chosen, portfolios will be optimized with respect to that risk measure. Thus, in reality, VaR and ES will typically not  be applied to the same position, and so one cannot decouple the technical properties of a chosen risk measure from the incentives it creates. In other words,  a risk measure as a standalone function may be robust, but fail to have desirable robustness properties when this measure is used within an optimization context.

In our paper, we make 
a first attempt of taking the incentives created by the choice of a risk measure  into account when assessing the risk measure's robustness properties.  In doing so, we   arrive at  completely different, and perhaps somewhat surprising,  conclusions concerning robustness properties than the previous literature.  To briefly illustrate our ideas, suppose that a risk factor is represented by a random variable $X$ arising from a stochastic model (later in the paper, $X$ will denote a random vector).
An investor has to optimize her position according to the
best of her knowledge, and hence
we shall refer to $X$ as the best-of-knowledge model, and the true model, denoted by $Z$, is typically unknowable.
Ideally, a good model $X$ is statistically close to $Z$ in  a sense to be made clear later.
Based on the best-of-knowledge model $X$ and an objective functional $\rho$, an optimized position  is chosen as a function $g(X)$ of $X$. Whereas the position $g(X)$ may have a desirable objective value $\rho(g(X))$,  this does not guarantee that $\rho(g(Z))$ is also desirable if  $Z$ is ``slightly" different  from $X$. 
In the absence of a perfect model, which almost always is the case in financial applications, this issue becomes crucially important.

Our motivation can be informally illustrated by Figure~\ref{ES_and_VaR_Pareto5}, which describes a situation in which the investor faces a Pareto-distributed risk with unknown parameter $\theta$.\footnote{The Pareto($\theta$) distribution function $F$ with parameter $\theta >0$ is specified as $F(x)=1-x^{-\theta}$, $x \ge 1$.} The investor takes a model $X$ with an estimated parameter  $\widehat\theta$ and optimizes $\rho(g(X))$ over a certain class of functions $g$; details are explained in the caption of Figure \ref{ES_and_VaR_Pareto5}.  We compute the actual risk $\rho(g_X(Z))$ faced by the investor for $\rho$ chosen as $ \VaR$ at level $0.99$ and $ \ES$ at level $0.975$ as in \cite{BASEL16}, where $g_X$ denote the corresponding optimizing functions (unique in the case of VaR)  using the model $X$ and $\rho$.
As one can see from Figure~\ref{ES_and_VaR_Pareto5}, 
the \emph{perceived} risk value $\rho(g_X(X))$ (see Section \ref{sec:23} for its interpretation) is similar for $\rho$ being VaR and ES; this also holds for the \emph{actual} risk value $\rho(g_X(Z))$ as long as $\theta>\widehat\theta $, meaning that $X$ is an overestimate of the true loss $Z$. If, however,  the true loss is slightly underestimated, the actual VaR of $g_X(Z)$ is substantially higher than  $\rho(g_X(X))$, whereas the actual ES remains almost flat. The intuitive explanation of this phenomenon is that under VaR it is optimal to concentrate all tail risk on an event whose  probability is so small that it does not affect the measured risk. If, however, the true probability of that event is underestimated by the model, then the tail risk suddenly does become significant. 
Simulation results show that, if $\widehat \theta$ is computed from a maximum likelihood estimator
based on iid sample points for the a true parameter,
then the mean square error measured by the difference between $\rho(g_X(X))$ and $\rho(g_X(Z))$ is large  for the case of VaR,
and it is tiny  for the case of ES (see
Section~\ref{sec:simulation}); this sharp contrast is expected from Figure \ref{ES_and_VaR_Pareto5}.

\begin{figure}[h]
\centering
\begin{overpic}[width=9cm]{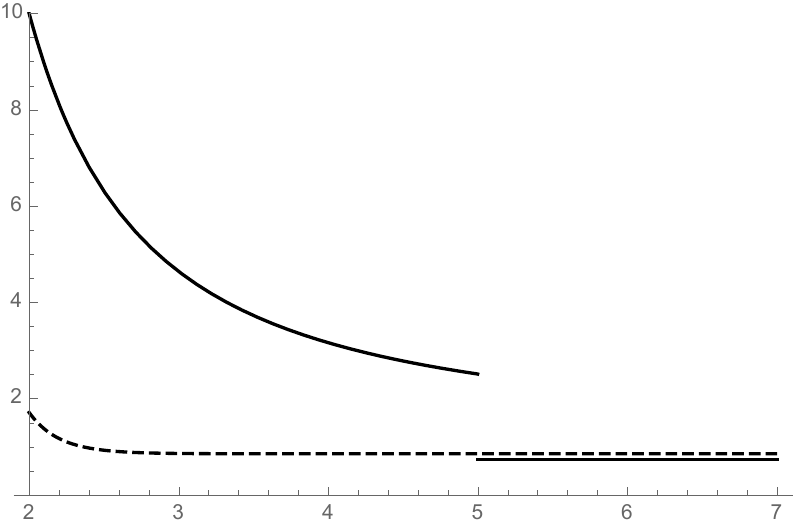}
\put(100,0){$\theta$}
\end{overpic}
\caption{
This plot shows $\rho(g_X(Z))$ for 
 $\rho=\VaR_{0.99}$ (solid) and $\rho=\ES_{0.975}$ (dashed), if $Z$ is Pareto distributed with parameter $\theta$. We assume that $X$ has a Pareto distribution with parameter $\widehat\theta=5$ and $g_X$ minimizes $\rho(g(X))$ within the class of all measurable functions $g$ 
satisfying  the inequality $0\le g(x)\le x$ for all $x\ge 0$ and the budget constraint $\mathbb E[\gamma(X)g(X)]\ge1$ formulated in Section \ref{sec:3}, where we take $\gamma(x)= x$ for simplicity.
The optimizer $g_X$ is unique in the case of VaR (Proposition \ref{prop:r1-var1}). See Section~\ref{sec:simulation} for more details and additional numerical simulations.}
\label{ES_and_VaR_Pareto5}
\end{figure}

In the present paper, we put the above observation into a rigorous quantitative framework  for general risk measures and optimization problems. 
The contribution and the structure of the paper are outlined below.
In Section \ref{sec:2}, we introduce the theoretical framework of robustness properties of risk measures in the context of optimization problems, referred to as ``robustness against optimization" throughout the paper.\footnote{We thank 
Paul Glasserman 
  for suggesting this terminology.}
The framework is quite general and it includes many practical problems in various fields of applications, not necessarily confined to finance and insurance.
Keeping the problem of risk measures in mind, a class of functional optimization problems is described in Section \ref{sec:3}. 
These optimization problems are analyzed in Sections \ref{sec:VaR}, \ref{rm section} and \ref{utility section} for VaR, convex risk measures, and expected loss (and utility)  functions, respectively. 
Robustness statements are obtained under general conditions, and further analytical solutions are available in some special cases. 
As the main message of our results, we see that, for the case of VaR which is argued by many as a robust risk measure, its corresponding optimization is highly non-robust and a small model uncertainty would ruin the optimality of the optimized positions. In sharp contrast, for many convex risk measures including ES and expected loss functions, the optimized positions are generally robust.  Some  simulation resuls, in addition to those underlying Figure \ref{ES_and_VaR_Pareto5}, are provided in Section~\ref{sec:simulation}. 
In Section \ref{sec:7}, we present some discussions on the implications of our results for the desirability of specific regulatory risk measures, an on-going debate in the financial industry (\cite{BASEL16}, \cite{IAIS14}).
Our results yield a (further) strong argument against using VaR as a risk measure within banking and insurance regulation;
for a related discussion on robustness in the realm of risk sharing, see \cite{ELW18}.
In the last section, Section \ref{sec:6a}, we discuss our notion of robustness in the context of distributionally robust optimization (e.g.~\cite{NPS08}, \cite{GS10}).
The proofs of all results are put in the appendix.


{As  Goodhart's law (\cite{G84}) implies,    when a risk measure becomes a target, it ceases to be a good risk measure.\footnote{\cite{D02} applied Goodhart's law   to  risk models by saying that a risk model breaks down when used for regulatory purposes.}
Nevertheless, to what extent this law applies depends on the specific application and each particular risk measure.  
Our results in this paper thus provide a quantitative and comparative analysis of Goodhart's law in the context of financial regulation and optimization.  
Combining our negative  result on VaR, which is an example of Goodhart's law,  and  our positive result for convex risk measures,  our  main message may be summarized as:
\begin{center}
\emph{As regulatory target, all risk measures cease to be good, but some risk measures, VaR in particular,  are much worse than the others.}
\end{center}}

%

 \section{Theoretical framework}\label{sec:2}
\subsection{Notation}
We work with an atomless probability space $(\Omega, \mathcal F, \p)$. Let $L^q$ be the set of all random variables in $(\Omega, \mathcal F, \p)$ with finite $q$-th moment, $q\in  {(0,\infty)}$, {$L^0$ be the space of all $\p$-a.s.~finite random variables, and $L^\infty$ the set of all essentially bounded elements of $L^0$}.
For a positive integer $n$, write $L^q_n=(L^q)^n$.
For a vector $x\in \R^n$, $|x|$ is its Euclidean norm.
Throughout, for any $X\in L^0$, $F_X$ represents the distribution function of $X$.
The mappings  $\essinf(\cdot)$ and $\esssup(\cdot)$ on $L^0$ stand for the essential infimum and the essential supremum of a random variable, respectively.
  We write $X\laweq Y$ if the random variables $X$ and $Y$   have the same distribution under $\p$.
For $x\in \R$, denote by $\delta_x$ the point-mass probability measure at $x$.
For real numbers or functions $x$ and $y$,  we write $ x\wedge y=\min\{x,y\}$, $x\vee y=\max\{x,y\}$, $x_+=x \vee 0$, and $x_-=(-x)\vee 0$.



 \subsection{Basic setup of optimization problems}
 
In this section, we first lay out the basic setup for optimization problems when the relevant information on the underlying economic model is known precisely, that is, the case without model uncertainty.
 Let $X$ be an $n$-dimensional random vector, where $n$ is a positive integer.
The random vector $X$ is called an \emph{economic vector}, which includes all random sources in an economic model under study,
such as potential losses, traded securities, hedging instruments, insurance contracts, macro economic factors, or pricing densities.

Let $\mathcal G_{n}$ be the set of {measurable} functions mapping $\R^{n}$ to $\R$.
A random variable $g(X)$ where $g\in \mathcal G_n$ represents a risky position of an investor, in which positive values represent a loss and negative values represent a gain. This sign convention is in line with the regulatory angle we follow in several applications in our paper.
The investor's problem is to choose among admissible  positions $g(X)$ for some functions $g$ in an admissible set $\mathcal G\subset \mathcal G_n$.

For  a set $\mathcal G\subset \mathcal G_n$,
we formulate the problem
\begin{equation}\mbox{to minimize: } \rho(g(X)) \mbox{~~~~~subject to~} g\in \mathcal G, \label{eq:opt1}\end{equation}
where $\rho$ is an objective functional mapping a set containing $\{g(X):g\in \mathcal G\}$ to $\RR$.
Here one prefers a smaller value of the objective functional over a larger value.
Objective functionals  considered {may be}  general; examples  include (up to a sign change) mean-variance functionals, expected utilities,  rank-dependent utility functionals, functionals in cumulative prospect theory, and various risk measures as discussed in \cite{ADEH99} and \cite{FS16}. {Our main interest will be, however, in the risk measures Value-at-Risk (VaR) and Expected Shortfall (ES\footnote{A formal definition of VaR and ES is given in Section \ref{sec:3}. ES is also known as CVaR, CTE, AVaR and TVaR, depending on the context (see e.g.~\cite{PR07}, \cite{MFE15} and \cite{FS16}). We use the term ES to be consistent with the Basel Committee on Banking Supervision (\cite{BASEL16}), because our study is motivated by the comparative advantages of VaR and ES in regulation; see discussions in Section \ref{sec:7}.}).}

The elements in the optimization problem \eqref{eq:opt1} can be summarized by an objective functional $\rho$ and a pair $(X,\mathcal G)$.
We always assume that the domain of the objective functional $\rho$ contains $\{g(X):g\in \mathcal G\}$, otherwise \eqref{eq:opt1} is meaningless.

\begin{example}\label{hedge ex}
An illustrative example is the classic problem of hedging in a financial market. Suppose that an investor currently faces a risk $W$ and would like to hedge against $W$.
She has access to hedging instruments in a set $\{g(Y):g\in \mathcal G'\}$ where $Y$ is an ${(n-1)}$-dimensional economic vector and, say, $\mathcal G'\subset \mathcal G_{{n-1}}$, ${n\ge 2}$.
Typically, the set $\mathcal G'$ involves a budget constraint.
Equivalently, she chooses risky positions in the set $\{W-g(Y): g\in \mathcal G'\}$, which represents all possible hedged positions she may attain.
Note that $\{W-g(Y): g\in \mathcal G'\}=\{f(X): f\in \mathcal G\}$ where $X=(W,Y)$ is an ${n}$-dimensional random vector and
 $$\mathcal G=\{f \in \mathcal G_{{n}}:f(w,y)=w-g(y),~g\in \G',~w\in \R,~y\in \R^{{n-1}}\};$$ therefore the hedging problem is a special case of our general setting \eqref{eq:opt1}.
Here we allow $W$ to be arbitrarily dependent on $Y$. If $W$ is a financial risk and $Y$ is the vector of asset prices in a complete financial market, then it may be that $W$ {is a function of} $Y$.
On the other hand, if $W$ represents a property and casualty insurance risk and $Y$ is  the vector of asset prices in a financial market, then it may be reasonable to assume that $W$ and $Y$ are independent.
\end{example}

 \begin{example}\label{Markowitz ex}In Markowitz's portfolio selection, an investor wishes to find an optimal allocation vector  $w\in\mathbb R^n$ based on a return vector $Y\in L^2_n$ for $n$ stocks. The problem can be described as the minimization of $\mathbb E[g(Y)]$ where $g\in\mathcal G'$ and $\mathcal G'$ consists of all functions $g\in \mathcal G_n$ that are of the form $g(y)=\lambda(w^\top y)^2-w^\top y$ for $w\in\mathbb R^n$ satisfying $\sum_{i=1}^nw_i=x_0$. Here, $\lambda>0$ is a risk-aversion parameter and $x_0$ represents the budget constraint of the investor.
\end{example}

In this section, we keep the choice of $(X,\mathcal G)$ as generic as possible. Special cases and examples are studied in Sections \ref{sec:3}-\ref{utility section}.

\subsection{Uncertainty and robustness against optimization}
\label{sec:23}

We proceed to put uncertainty into the optimization problem \eqref{eq:opt1} described above.
For  $X\in L_n^0$, $\mathcal G\subset \mathcal G_n$ and an objective functional $\rho$,
denote by $\rho(X;\mathcal G)$ the minimum possible value of $\rho$, namely,
$$\rho(X;\mathcal G)= \inf\{\rho(g(X)): g\in \mathcal G\},$$ 
and by $\mathcal G_X(\rho)$ the set of \emph{optimizing functions}, that is,
\begin{equation}\label{eq:opt2}
\mathcal G_X(\rho)=\{g\in \mathcal G:\rho(g(X))=\rho(X;\mathcal G)\}.
\end{equation}
Note that $\mathcal G_X(\rho)$ might be an empty set.  Throughout this paper, the notation $g_X$ will refer to a generic element $g_X\in  \mathcal G_X(\rho)$, and  
$g_X(X)$ will be called an \emph{optimized position}.

We shall use  $X$ to represent our (perceived) model  for the underlying economic vector.
In practice, the model $X$ is obtained based on stochastic assumptions  and statistical inference, and it may not represent a true model  for the underlying economic vector.
In other words, the  optimization problem \eqref{eq:opt1} is often subject to severe model uncertainty. To reflect this issue, let $\Z\subset L_n^0$ be a set of \emph{possible} economic vectors including $X$; $\Z$ may be interpreted as the set of \emph{alternative} models.\footnote{
 For instance, $\Z$ can be a parametric family of risk models and hence corresponds to parameter uncertainty of the underlying risk models. 
Each component of the vector $X$ may have a different economic meaning.  Some of them may be subject to more severe model uncertainty whereas others may be free of model uncertainty.
This can be reflected in the choice of $\Z$ which may be contained in a low-dimensional subset of the set of $n$-random vectors.}
 Suppose that the \emph{real} economic vector  $Z\in \Z$ is different from the \emph{perceived} economic vector $X$.
The information  we have at hand is  about $X$ rather than $Z$,
and we shall  refer to $X$ as the \emph{best-of-knowledge model} and $Z$ as the \emph{true model}, which is unknowable.
We have to make decisions according to the best of our knowledge, that is, as in \eqref{eq:opt2}, choosing $ g_X \in   \mathcal G_X(\rho)$  optimizing our objective $\rho$.
The  \emph{real} but unknown   position $  g_X (Z)$ may be different from the \emph{perceived} optimal 
 position $  g_X (X)$.
If $Z$ and $X$ are close to each other according to some {(pseudo-)}metric $\pi$ (e.g.~$L^\infty$-metric on the   space of bounded random vectors),
we would like  $\rho(  g_X (Z))$ to be close to $\rho(  g_X (X))$ in order to make sense of the position $  g_X (Z)$, which may no longer be optimal.
 In other words, we naturally would desire some continuity of the mapping  {$Y\mapsto \rho(  g_X (Y ))$} at $Y=X$.

 {Note that, in our situation, there is no point in analyzing  the problem of optimizing $g(Z)$ over $g\in \mathcal G$, because $Z$ is unknown.} This makes our framework conceptually different from the stream of research on stability of the set of optimizers under model uncertainty. In Section \ref{sec:remarks}, using examples from financial regulation,
  we make explicit this very important distinction between our paper and other approaches in the literature.

Putting this into the {hedging} context of Example \ref{hedge ex}, suppose that
the real economic vector $Z\in \Z$ is different from $X$ and
an investor has the real risk $W=h(Z)$ to hedge.
The information she has is  about $X$, and she hedges $W$ by choosing $g_X$ from a set of available instruments $\mathcal G'$.
In this case,   $h(Z)-g_X(Z)$   is the remaining risk she actually faces after hedging.
Under this setting, assuming the quality of model $X$ is good, $Z$ should be close to $X$ in some sense, and she would naturally desire some continuity of the function $ {Y\mapsto\rho(h(Y)-  g_X (Y))}$  at $Y=X$. 

The admissible set $\mathcal G$ is not subject to model uncertainty, as the investor knows which positions she can choose in the optimization problem.
For instance, in the above hedging example, a budget constraint that determines $\mathcal G$ is not affected by the model assumptions made for $X$; it is simply the observed prices for the hedging instruments.

In light of the above consideration,
we endow the set $\mathcal Z$ of all possible economic vectors with a pseudo-metric $\pi$. 
Common choices of $\pi$ are the $L^\infty$, the $L^q$, the Wasserstein, and the weak (pseudo-)metrics; see Example \ref{ex:metrics} below.
The reason for considering a pseudo-metric instead of a metric is to be able to incorporate  objective functionals based on the distributions of risks, for instance, law-invariant risk measures and expected utility functions.\footnote{Because of the extensive use of simulation and estimation methods for risk evaluation, uncertainty at the level of distributions is the most common in risk management practice; on the other hand, in optimization problems, risks with identical distributions are not equivalent (see our functional optimization problems in Section \ref{sec:3}).  For this reason, we do not use the equivalent class induced by the pseudo-metric.}
%


\begin{definition}\label{def:0}
We call $(\G,\Z,\pi)$ an \emph{uncertainty triplet} if  $\G\subset \G_n$ and $(\Z,\pi)$ is a pseudo-metric space of $n$-random vectors.
For a given uncertainty triplet $( \mathcal G,\Z,\pi)$, we say that an objective functional $\rho$ is \emph{compatible} if
$\rho$ maps $\G(\Z)=\mathcal\{g(Z):Z\in \Z,~g\in \mathcal G\}$ to $\RR$, and $\rho(g(Y))=\rho(g(Z))$ for all $g\in \mathcal G$ and  $Y,Z\in \Z$ with $\pi(Y,Z)=0$, i.e.~$Y$ and $Z$ are  indistinguishable under the pseudo-metric $\pi$.\end{definition}

\begin{definition}\label{def:1}
Let $( \mathcal G,\Z,\pi)$ be an uncertainty triplet.
A compatible objective functional
$\rho$ is \emph{robust against optimization   at
$X\in \Z$ for $(\G, \Z,\pi)$}  if there exists $  g_X \in  \mathcal G_X(\rho)$ such that
the function
$ {Y\mapsto \rho(  g_X (Y))}$ is $\pi$-continuous at $Y=X$.  
\end{definition}


In this paper, we are mainly interested in robustness  in the sense of Definition \ref{def:1}, and it should not be confused with the classic qualitative robustness of risk measures as studied in e.g.~\cite{CDS10}, \cite{KPH13}, \cite{KSZ14} and \cite{EWW15}.
On the other hand, in contrast to the robust optimization literature (e.g.~\cite{GS10}, \cite{WKS14}),
our focus is the robustness of objective functionals in optimization, instead of \emph{how to solve} particular optimization problems.
As such, our setup and methodology are also different from classic ones in the optimization literature.
  Section \ref{sec:remarks} contains  further discussions on the formulation of Definition \ref{def:1}, including some possible alternatives.


Below we give three prominent examples of $\pi$, which will appear throughout the paper.

\begin{example}\label{ex:metrics}
\begin{enumerate}[(i)]
\item For   a subset $\mathcal Z$  of $L^\infty_n$,
the $L^\infty$-metric $\pi_n^\infty$ is defined as
\begin{equation}\label{eq:linftym}
\pi_n^\infty(X,Y)=||X-Y||_\infty=\esssup(|X-Y|),~X,Y \in \mathcal Z.
\end{equation}
\item For $q\in [1,\infty)$ and a subset $\mathcal Z$  of $L^q_n$,
the $L^q$-metric $\pi_n^q$ is defined as
\begin{equation}\label{eq:lqm}
\pi_n^q(X,Y)=||X-Y||_q=(\E[|X-Y|^q])^{\frac 1q},~X,Y \in  \mathcal Z.
\end{equation}
\item For a subset $\Z$ of $L^0_n$,
the pseudo-metric $\pi^W_n$ is defined as
\begin{equation}\label{eq:weakm}
\pi_n^W(X,Y)=\pi_P(F_X,F_Y),~X,Y \in  \mathcal Z.
\end{equation}
where $\pi_P$ is the Prokhorov metric over the set of probability distribution measures.\footnote{Precisely, $\pi_P(\mu ,\nu )=\inf \left\{\varepsilon >0:\mu (A)\leq \nu (A_{{\varepsilon }})+\varepsilon \ {\text{and}}\ \nu (A)\leq \mu (A_{{\varepsilon }})+\varepsilon \ {\text{for all}}\ A\in {\mathcal  {B}}(\R^n)\right\}$, where $A_{{\varepsilon }}=\{x\in \R^n:  ||x-y||<\varepsilon \mbox{~for some~}y\in A\}$ and $||\cdot||$ is the Euclidean norm.
}
In this case, convergence in $\pi_n^W$ is equivalent to convergence in distribution (or weak convergence). 
One could  also replace $\pi_P$ in \eqref{eq:weakm} by a Wasserstein metric over probability measures, and obtain a Wasserstein pseudo-metric on a suitable subspace of $L^0_n$.
\end{enumerate}
\end{example}

\subsection{Basic properties on {robust}ness and continuity}\label{sec:24}

We first explain a few basic properties about Definition \ref{def:1}.
Robustness against optimization is a  {joint property} of {$(\rho, X,\G, \Z,\pi)$}, and only a {$\pi$-neighbourhood} of $X$ in $\Z$ matters in the definition.
If $\rho$  is  robust against optimization at
$X$ for $(\G, \Z,\pi)$, then  $\rho$  is   also  robust against optimization  at $X$ for $(\G,   \Z',\pi)$ if   {$X\in \Z'\subset \Z$},
and the same holds true for $(\G, \Z, \pi')$ if $ \pi'$ is a {stronger}  pseudo-metric than $\pi$.
On the other hand, if the optimization problem does not admit a solution, that is, $ \mathcal G_X(\rho) =\varnothing$, then  $\rho$ is not {robust against optimization} at $X$.

Robustness of $\rho$ relies on both some continuity of $\rho$ on $\G(\Z)$ and some continuity of functions in $ \mathcal G_X(\rho)$.  
In what follows, we give a few general results where  $ \mathcal G_X(\rho)$ contains a continuous function.
Whereas these results are fairly simple, they could nevertheless be useful in   situations where $\mathcal G$ is nice enough.  However, later we will see that in many representative problems, $ \mathcal G_X(\rho)$ does not necessarily have any continuous elements for commonly used risk measures such as VaR and convex risk measures; their robustness properties will be investigated in Sections \ref{sec:VaR} and \ref{rm section}.

For a bijection $g\in \mathcal G_n$ and a pseudo-metric space $(\Z,\pi)$ of $n$-dimensional random vectors,
let $(g(\Z),\pi_{g})$ be another pseudo-metric space defined as
$$\pi_{g}(g(X),g(Y))= \pi(X,Y)~~ \mbox{for $X,Y\in \Z$.}$$

\begin{proposition}\label{prop:31}
Suppose that for an uncertainty triplet $(\mathcal G,\Z,\pi)$, $X\in \Z$ and a compatible objective functional $\rho$, $ \mathcal G_X(\rho)$ contains a bijection $g$, and $\rho$ is $\pi_{g}$-continuous on $g(\Z)$. Then
$\rho$ is robust against optimization at
$X$ for $(\G, \Z,\pi)$.
\end{proposition}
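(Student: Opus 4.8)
The plan is to take the bijection $g\in\mathcal G_X(\rho)$ itself as the optimizing function $g_X$ demanded by Definition \ref{def:1}, and then to observe that the map $Y\mapsto\rho(g(Y))$ on $\Z$ is literally the composition of the distance-preserving map $Y\mapsto g(Y)$ from $(\Z,\pi)$ to $(g(\Z),\pi_g)$ with the map $\rho$ on $(g(\Z),\pi_g)$, which are continuous by construction and by hypothesis, respectively.

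First I would check the two harmless well-definedness points: that $\pi_g$ is a bona fide pseudo-metric on $g(\Z)$, and that $\rho$ is defined there. Since $g\colon\R^n\to\R$ is a bijection, $g(Y)=g(Y')$ $\p$-a.s.\ forces $Y=g^{-1}(g(Y))=g^{-1}(g(Y'))=Y'$ $\p$-a.s., so $Y\mapsto g(Y)$ is a bijection from $\Z$ onto $g(\Z)$ at the level of random vectors; hence the defining relation $\pi_g(g(Y),g(Y'))=\pi(Y,Y')$ assigns a well-defined value to every pair in $g(\Z)$, and symmetry and the triangle inequality for $\pi_g$ are inherited from those of $\pi$. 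Moreover $g\in\mathcal G$ gives $g(\Z)\subset\G(\Z)$, so $\rho$ maps $g(\Z)$ into $\RR$ by compatibility, and the hypothesis ``$\rho$ is $\pi_g$-continuous on $g(\Z)$'' makes sense and in particular gives continuity of $\rho$ at the point $g(X)\in g(\Z)$.

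Then the argument is immediate: put $g_X:=g$, which is an admissible choice because $g\in\mathcal G_X(\rho)$. If $Y_k\in\Z$ with $\pi(Y_k,X)\to 0$, then $\pi_g(g(Y_k),g(X))=\pi(Y_k,X)\to 0$ by the definition of $\pi_g$, and $\pi_g$-continuity of $\rho$ at $g(X)$ yields $\rho(g_X(Y_k))=\rho(g(Y_k))\to\rho(g(X))=\rho(g_X(X))$; equivalently, $Y\mapsto\rho(g_X(Y))$ is the composition of $\rho$ with the $\pi$-to-$\pi_g$ continuous (indeed distance-preserving) map $Y\mapsto g(Y)$, hence is $\pi$-continuous at $X$. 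By Definition \ref{def:1}, $\rho$ is robust at $X$ relative to $(\G,\Z,\pi)$. I do not foresee a genuine obstacle: the whole content is the bookkeeping in this paragraph together with the two checks above, because $\pi_g$ is by design nothing but the pullback of $\pi$ along $g$, which automatically makes $g$ a continuous map between the two pseudo-metric spaces.
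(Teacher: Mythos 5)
Your proof is correct and follows essentially the same route as the paper: take $g_X:=g$ and use the fact that $\pi_g$ is by definition the pullback of $\pi$ along $g$, so $\pi_g$-continuity of $\rho$ on $g(\Z)$ translates directly into $\pi$-continuity of $Y\mapsto\rho(g(Y))$ at $X$. The extra well-definedness checks (that $\pi_g$ is a genuine pseudo-metric and that $\rho$ is defined on $g(\Z)$) are fine but not part of the paper's one-line argument.
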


Next we look at the basic settings of $\Z=L^\infty_n$, $\Z=L^q_n$, and $\Z=L^0_n$, equipped with the $L^\infty$ metric $\pi_n^\infty$ in \eqref{eq:linftym}, the $L^q$ metric $\pi_n^q$ in \eqref{eq:lqm}, and the pseudo-metric $\pi_n^W$ in \eqref{eq:weakm}, respectively.

\begin{proposition}\label{prop:32}
Let   $\rho$ be a compatible objective functional for the uncertainty triplet $(\mathcal G,\Z, \pi)$ and $X\in \Z$.
\begin{enumerate}[(i)]
\item Suppose $(\Z,\pi)=(L^\infty_n, \pi^\infty_n)$. If
 $ \mathcal G_X(\rho)$ contains a continuous function $g:\R^n\to \R$ and $\rho$ is $\pi_1^\infty$-continuous, then
$\rho$ is robust against optimization at
$X$ for $(\G, \Z, \pi)$.
\item Suppose $(\Z,\pi)=(L^q_n, \pi^q_n)$, $q\in [1,\infty)$.
If  $ \mathcal G_X(\rho)$ contains a  continuous and linearly growing\footnote{A function $g:\R^n\to \R$ is \emph{linearly growing}
if for some $C>0$, $|g(y)|\le C|y|$  for all  $y\in \R^n$ with $|y|>1$. This property is satisfied by, for instance, Lipschitz-continuous functions. } function $g:\R^n\to \R$ and  $\rho$ is $\pi_1^q$-continuous, then
$\rho$ is robust against optimization at
$X$ for $(\G, \Z,\pi)$.
\item Suppose $(\Z,\pi)=(L^0_n, \pi_n^W)$. If
 $ \mathcal G_X(\rho)$ contains a continuous function $g:\R^n\to \R$ and $\rho$ is $\pi_1^W$-continuous, then
$\rho$ is robust against optimization at
$X$ for $(\G, \Z, \pi)$.
\end{enumerate}
\end{proposition}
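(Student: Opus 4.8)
The plan is to derive all three parts from Proposition \ref{prop:31}, or rather from the idea behind it, by reducing each case to a one-dimensional continuity statement. Fix $g\in \mathcal G_X(\rho)$ continuous (and, in part (ii), linearly growing). We want to show that $Y\mapsto \rho(g(Y))$ is $\pi$-continuous at $Y=X$. Since $\rho$ is assumed $\pi_1^\infty$- (resp.\ $\pi_1^q$-, resp.\ $\pi_1^W$-) continuous on the relevant set of one-dimensional random variables, it suffices to show that the map $Y\mapsto g(Y)$ from $(\Z,\pi)$ to $(L^1_1$ or $L^q_1$ or $L^0_1$, with the corresponding one-dimensional metric$)$ is continuous at $X$. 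So the whole proof collapses to three elementary "continuous functions preserve the relevant convergence'' facts, applied along sequences $Y_m\to X$ in $\Z$.

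First I would handle part (i). If $\pi_n^\infty(Y_m,X)\to 0$, then $Y_m\to X$ uniformly (a.s.\ in the vector sense), so a.s.\ all $Y_m$ and $X$ take values in a fixed compact ball $K\subset\R^n$; on $K$ the continuous function $g$ is uniformly continuous, hence $\esssup|g(Y_m)-g(X)|\to 0$, i.e.\ $\pi_1^\infty(g(Y_m),g(X))\to 0$. Then $\pi_1^\infty$-continuity of $\rho$ gives $\rho(g(Y_m))\to\rho(g(X))$, and since this holds along every sequence, $Y\mapsto\rho(g(Y))$ is $\pi_n^\infty$-continuous at $X$, so $\rho$ is robust at $X$. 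For part (iii), if $\pi_n^W(Y_m,X)\to 0$ then $Y_m\lawcn X$; by the continuous mapping theorem $g(Y_m)\lawcn g(X)$ (here $g$ is continuous everywhere, so the discontinuity set is empty and has $F_X$-measure zero), i.e.\ $\pi_1^W(g(Y_m),g(X))\to 0$; then $\pi_1^W$-continuity of $\rho$ finishes it exactly as before. Note both arguments use compatibility only implicitly, through the fact that $\rho$ is well-defined on $\mathcal G(\Z)$.

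Part (ii) is the one requiring the linear-growth hypothesis and is the main (though still mild) obstacle: convergence in $L^q_n$ does not by itself give that $g(Y_m)\to g(X)$ in $L^q_1$ for a general continuous $g$, because $g$ could blow up faster than the $L^q$-control allows. The plan is: given $\pi_n^q(Y_m,X)\to 0$, pass to a subsequence along which $Y_m\to X$ a.s.\ and $|Y_m|\le H$ for some $H\in L^q$ (standard: dominated subsequence extracted from an $L^q$-convergent sequence). On this subsequence, continuity of $g$ gives $g(Y_m)\to g(X)$ a.s., and linear growth gives $|g(Y_m)|\le C(1+|Y_m|)\le C(1+H)\in L^q$, so dominated convergence yields $g(Y_m)\to g(X)$ in $L^q_1$, hence $\rho(g(Y_m))\to\rho(g(X))$. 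Since every subsequence of the original sequence has a further subsequence along which $\rho(g(Y_m))\to\rho(g(X))$, the full sequence converges, establishing continuity at $X$ and thus robustness. The only real care needed is the subsequence-of-subsequence argument and the extraction of an $L^q$-dominating function, both of which are routine.
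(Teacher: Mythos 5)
Your proposal is correct and follows essentially the same route as the paper: you reduce robustness to showing that $Y\mapsto g(Y)$ carries $\pi$-convergence in $\Z$ into the corresponding one-dimensional convergence and then compose with the assumed $\pi_1$-continuity of $\rho$, with parts (i) and (iii) being exactly the paper's Heine--Cantor and continuous-mapping arguments. The only (immaterial) variation is in (ii), where the paper proves $L^q$-convergence of $g(X_k)$ via convergence in probability plus uniform integrability obtained from the tail estimate $\E[|g(X_k)|^q\id_{\{|g(X_k)|>c\}}]\le C^q\E[|X_k|^q\id_{\{|X_k|>c/C\}}]$, whereas you use an a.s.-convergent dominated subsequence, dominated convergence, and a sub-subsequence argument; note only that your bound $|g(y)|\le C(1+|y|)$ uses continuity of $g$ on the unit ball in addition to the linear-growth condition, a trivial adjustment of the constant.
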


Proposition \ref{prop:32} provides simple criteria for verifying robustness of some objective functionals based on continuity of the optimizing functions in $\mathcal G_X(\rho)$.
As we shall see in Sections \ref{sec:VaR}-\ref{rm section}, for the popular risk measures VaR and ES, such criteria may not be very useful, as typically the optimizing functions lack the corresponding continuity.
More detailed analyses are needed to draw meaningful  conclusions for these objectives,
which will be the focus of  the next few sections.

\section{A class of functional optimization problems}\label{sec:3}


Our main {interest is in the robustness of risk measures} in optimization, and in particular, Value-at-Risk (VaR)  and    Expected Shortfall (ES).
Here, a positive value of $Y$ represents a loss and a negative value represents a gain.
The VaR at confidence level $p\in(0,1)$ is defined as
$$
 \VaR_p(Y)= \inf\{x\in \R: \p(Y\le x)\ge p\}=F_{Y}^{-1}(p),~~Y\in L^0,
$$
and the ES at confidence level $p\in(0,1)$ is defined as
\begin{equation}\label{ES def}
  \ES_p(Y)=\frac{1}{1-p}\int_p^1 \mathrm{VaR}  _s(Y)\d s,~~Y\in L^0.
\end{equation}
Note that $\ES_p(Y)$ may take the value $\infty$ if $Y$ is not integrable.
In addition, we  write
$$\ES_1(Y)=\VaR_1(Y)= \esssup(Y)= \sup\{x\in \R: \p(Y\le x)<1\}.$$

We summarize  {some well known} robustness properties of $\VaR_p$ and $\ES_p$ for $p\in (0,1)$ below.
\begin{enumerate}
\item[1.)]
$\VaR_p$ is continuous with respect to convergence in distribution, and hence (Wasserstein) $L^q$-convergence for $q\in [1,\infty]$, at $X$ if and only if the inverse cdf of $X$ is continuous at $p$; see, e.g., Proposition 7.3.1 in \cite{Shorack}.
\item[2.)]  It follows immediately from 1.) and \eqref{ES def} that $\ES_p$ is continuous with respect to convergence in distribution on every uniformly integrable subset of $ L^1$. In particular, $\ES_p$ is continuous with respect to   $L^q$-convergence for $q\in [1,\infty]$. On the other hand, $\ES_p$ is not continuous with respect to convergence in distribution on any set containing $L^\infty_+$.
\end{enumerate}
In addition to VaR and ES, we will consider two general classes of convex risk measures, 
as well as expected utility and loss functions, which will be introduced in Sections
\ref{rm section} and \ref{utility section}.

Next, we describe a general class of functional optimization problems. 
 %
%
%
%
For an $n$-dimensional random vector  $X$, two measurable functions
$v:\mathbb R^n\to \mathbb R\cup\{-\infty\}$ and $ w:\mathbb R^n\to \mathbb R$,  a measurable price density $\gamma:\mathbb R^n\to(0,\infty)$, and a constant $x_0\in\mathbb R$,
we consider the following set   
\begin{equation}
\label{eq:general-g}
\mathcal G=\Big\{g\in\mathcal G_n: v\le g\le w \mbox{ and }\mathbb E[g(X)\gamma(X)]\ge x_0 \Big\}.
\end{equation}
The corresponding optimization problem is
\begin{equation}
\label{eq:general-opt}
\mbox{to minimize: } \rho(g(X)) \mbox{~~~~~subject to~} 
v\le g\le w,~~ \E[ \gamma(X)g(X)]\ge x_0.
 \end{equation} 
 The optimization problem \eqref{eq:general-opt} is called `functional' because the objective is optimized over a large set of $n$-variate functions.

%
Intuitively, the functions $v$ and $w$   describe lower and upper bounds  on admissible functions $g\in\mathcal G$, while the condition $\mathbb E[g(X)\gamma(X)]\ge x_0 $ describes a budget constraint.\footnote{Using our sign convention,  by holding a risky position $g(X)$, one receives the monetary amount $\E[\gamma (X) g(X)]$. Equivalently, one pays $-\E[\gamma (X) g(X)]\le -x_0$, thus the usual budget constraint. 
}  
Hence, the  problem \eqref{eq:general-opt} represents   portfolio optimization with given budget, which is a classic problem in quantitative finance, and includes many interesting special cases. 
Below we present two  simple examples, one in the context of a complete financial market and the other one in the context of   insurance design. 
Since this problem has attracted substantial interest in its own right,   we will en passant contribute to the corresponding literature; see Remark \ref{opt remark}.

%
 
 \begin{example}[Optimal investment]\label{ex:ctfinmarket}
The optimization problem  \eqref{eq:general-opt}  
connects as follows with  continuous-time optimal hedging problems in a complete market with an arbitrary number of primary assets. Suppose that $S=(S_t)_{t\in[0,T]}$ is a $d$-dimensional semimartingale admitting a unique local martingale measure $\mathbb Q$ with price density $\gamma=\d\mathbb Q/\d\mathbb P$ on $\mathcal F=\mathcal F^S_T$.  
We can interpret $S$ as the discounted price process of $d$ risky securities. The fact that $\mathbb Q$ is unique is equivalent to the completeness of the model. Let $X$ be a random vector with $\sigma(X)=\mathcal F^S_T$, which represents   market randomness (such a random vector exists under mild conditions such as continuity of $S$). 
In this context, for an investor who needs to pay a random wealth $f(X)$ at time $T$, it is a natural task to minimize, e.g.,  $\rho(f(X) -V_T)$, where  $V_T:=V_T(X)$ is the discounted time-$T$ value of a self-financing trading strategy satisfying a cost constraint on the initial investment $V_0$ and   certain other constraints such as $v(X) \le V_T \le w(X)$ for some functions $v$ and $w$. By martingale arguments, the initial investment satisfies $V_0=\mathbb E[\gamma (X)V_T]$. On the other hand, market completeness implies that $S$ has the martingale representation property, and so every feasible functional $g(X)$ can be represented in the form $f(X)-V_T$, where $V$ is again the value process of some self-financing trading strategy. 
So, by letting $g(x):=f(x)-V_T(x)$, we arrive at a special case of \eqref{eq:general-g}. 
\end{example}

\begin{example}[Insurance design]
\label{ex:insurance} 
The problems of insurance design, pioneered by \cite{A63}, can also be described by \eqref{eq:general-opt}.  
Let $X\ge 0$ represent a random future loss to an insured, and $f(x)$ represent the 
amount of payment from the insurer if the realized loss $X$ is equal to $x$; $f$ is called an insurance indemnity function.
Suppose $\gamma \ge 1$ is a constant, and $\gamma \E[ f(X)]$ is used to price the insurance contract with payment $f$ (premium based on the expected payment is called the actuarial premium). Let $y_0$ be the budget of the insured. 
The standard optimal insurance problem of the insured with risk measure $\rho$ is 
 $$
\mbox{to minimize: } \rho(X-f(X)) \mbox{~~~~~subject to~} 
0 \le f(X)\le X,~~\gamma \E[ f(X)]\le y_0,
 $$
which belongs to \eqref{eq:general-opt} by choosing $g(x)=x-f(x)$, $w(x)=x$, $v(x)=0$, and
 $x_0=\gamma \E[ X]- y_0$. Some other requirements on the payment $f$ may be further imposed; see e.g.,   \cite{BHYZ15} and the references therein.
\end{example}

To study  the robustness of risk measures for this problem, the main assumption on $(\mathcal G,\mathcal Z,\pi)$ and $X$  is given below, which will be assumed in  the next three sections.  
The assumption on $X$ is standard and satisfied by practically all financial models in which the assets prices  have densities. 
\renewcommand\theassumption{G} 
\begin{assumption}\label{assm:g}
$\mathcal G$ is given by \eqref{eq:general-g} where $\E[\gamma(X)]<\infty$  and $\mathcal G\ne \varnothing$;
 the distribution measure $\mu_X$  of $X$ has a positive density on its support, which is a convex subset of $\R^n$, and 
$(\mathcal Z,\pi)$ is  $(L_n^0, \pi^W_n)$ or $(L_n^q, \pi^q_n)$, $q\in [1,\infty]$.   
\end{assumption}

In general, it is difficult to obtain analytical solutions to \eqref{eq:general-opt}; 
instead, we will obtain robustness statements on various risk measures in the subsequent sections. 
Explicit optimizers can be obtained for a particular case
of \eqref{eq:general-opt} such that $X$ is one-dimensional,
$v(x)=0$ and $w(x)=x$  (see e.g., Example \ref{ex:insurance});
that is, 
\begin{equation}
\label{eq:r1-var1}
\mbox{to minimize: } \rho(g(X)) \mbox{~~~~~subject to~} 
0 \le g(X)\le X,~~\E[ \gamma (X) g(X)]\ge x_0.
 \end{equation} 
 
%
%
%

\section{Robustness of Value-at-Risk}\label{sec:VaR}

The main task of this section is to establish the (non-)robustness of VaR for the optimization problem \eqref{eq:general-opt} in Section \ref{sec:3}.   
We will make the following assumption on $\gamma$, $v$ and $w$, as well as  the minimum value of the risk measure $\rho=\VaR_p$ for $p\in (0,1)$. Recall that $\rho(X;\mathcal G)= \inf\{\rho(g(X)): g\in \mathcal G\}$.
\renewcommand\theassumption{V}
\begin{assumption}\label{assm:v}
 $\esssup (v)< \rho(X;\mathcal G) < \rho(w(X))$ and $\gamma$ is bounded from above.
\end{assumption}

Assumption  \ref{assm:v} is quite general and weak.
The condition $  \esssup (v)< \rho(X;\mathcal G)$, meaning that the lower bound $v$ is not too large, is  primarily  used to guarantee that the budget constraint is binding. The condition $\rho(X;\mathcal G) < \rho(w(X))$ simply says that the optimization problem is not solved by trivially choosing the largest possible position $g=w$.
The boundedness condition on $\gamma$  may be significantly relaxed as we only need boundedness in a neighbourhood of one specific point. 
An explanation and a technical discussion on these conditions are put in Remark \ref{rem:var-assm} in Appendix \ref{app:r1-1}.

\begin{theorem}\label{th:var1}
For  $p\in(0,1)$, under Assumptions \ref{assm:g} and \ref{assm:v},
$\rho=\VaR_p$ is not {robust} against optimization at $X$ for  $(\G,  \mathcal Z , \pi)$.
\end{theorem}

Theorem \ref{th:var1} implies that, for the optimization problem  \eqref{eq:general-opt}  and all choices of commonly used (pseudo-)metrics, $\VaR_p$ is not {robust} against optimization, and this result holds for a general continuously distributed random vector $X$.  As a consequence, $\VaR_p$ has the poorest possible robustness in our setup.
The main reason behind this phenomenon is quite intuitive: as a key point in the proof of Theorem \ref{th:var1}, any optimizing function  $ g_X $ always has a jump at the $p$-quantile of $ {g_X(X)}$,  making it   most vulnerable to model uncertainty.

In practice, one may considers a subset $\mathcal G'$ of $\mathcal G$ which contains only continuous functions, so that robustness holds by Proposition \ref{prop:32}.
We emphasize that
the optimization for VaR is still problematic in this setting: One needs to search for  functions in $\mathcal G'$ which closely approximate the discontinuous function $g_X$, and hence its stability is still weak.

Next, we consider the   one-dimensional setting \eqref{eq:r1-var1}, which corresponds to $v(x)=0$ and $w(x)=x$.  
Denote by 
$
q:=\VaR_p(X;\mathcal G)
$  the minimum value of \eqref{eq:r1-var1} with $\rho=\VaR_p$,  
and we have $q>0$ by Assumption \ref{assm:v}.
The next proposition gives an explicit solution to \eqref{eq:r1-var1}, under a further minor condition that 
$(X-q)\gamma(X)$ has a unique $p$-quantile. 
%
\begin{proposition}\label{prop:r1-var1}
Let $p\in (0,1)$, $\rho=\VaR_p$ and $Y =(X-q)\gamma(X)$. Suppose that Assumptions \ref{assm:g} and \ref{assm:v} hold, $\E[\gamma (X) X]<\infty$, and
 $ \p  ({ Y }\le \VaR_p( Y)   )=p.$
 Problem \eqref{eq:r1-var1}  admits a $\mu_X$-a.s.~unique solution that is of the form
  \begin{equation} \label{eq:r1-varopt4-ns}
 {g_X(x)}=x\id_{\{{(x-q)\gamma(x)}> c \}}  + (x\wedge q) \id_{\{  (x-q)\gamma(x)\le  c \}},
\end{equation}
where $c=\VaR_p({ Y })$. Moreover,  
\begin{equation}\label{eq:r1-var3}
p\ES_{1-p}(-Y_+)= x_0- \E[\gamma(X) X].
\end{equation}
\end{proposition}
Since $Y_+= \gamma(X)(X-q)_+$,  the left-hand side of \eqref{eq:r1-var3} is an increasing function of $q$, and hence
the value $q$ can be  numerically computed by solving \eqref{eq:r1-var3}.
Section \ref{sec:simulation} contains simulation studies for the problem setting in Proposition \ref{prop:r1-var1}.

\section{Robustness of two classes of convex risk measures}\label{rm section}

In this section, we  obtain positive robustness results  for two important classes of convex risk measures, the \emph{divergence risk measures} and the \emph{utility-based shortfall risk measures}. The first class of risk measures, which are sometimes also called optimized certainty equivalents, contains Expected Shortfall as an important special case. The second class comprises the expectiles. Both classes intersect at the entropic risk measure, which can  also be treated within the setting of our subsequent Section  \ref{utility section}, where we will analyze the robustness of expected utility/expected loss. 

We continue to study the optimization problem  \eqref{eq:general-opt}.
The following simple regularity conditions on $\gamma$, $v$ and $w$ are made to establish results in this section.

\renewcommand\theassumption{P}
\begin{assumption}\label{assm:gamma}
The price density  $\gamma:\mathbb R^n\to(0,\infty)$ is $\mu_X$-a.e.~continuous  and $\gamma(X)$ has a continuous density. 
\end{assumption}

\renewcommand\theassumption{R}
\begin{assumption}\label{assm:vw}
The functions 
 $v$ and  $w$ are  $\mu_X$-a.e.~continuous.  Moreover,     $-\infty\le \mathbb E[\gamma(X)v(X)]\le x_0\le \mathbb E[\gamma(X)w(X)] \le \mathbb E[|\gamma(X) w(X)|]<\infty$.  
\end{assumption}


We consider a convex risk measure, i.e., a functional $\rho:L^1\to \mathbb R\cup\{+\infty\}$ satisfying monotonicity, cash invariance, and convexity (see, e.g., Chapter 4 in \cite{FS16}). 
Let $\varphi:\mathbb R\to[0,+\infty]$ be a proper closed convex function whose effective domain is an interval with endpoints $a<b$.  We assume moreover that $a<1<b$ and that $0=\varphi(1)=\min_x\varphi(x)$. Then the $\varphi$-divergence of a probability measure $\mathbb Q$ with respect to $\mathbb P$ is 
\begin{equation}
I_\varphi(\mathbb Q|\mathbb P):=\begin{cases}\displaystyle \int\varphi\Big(\frac{\d \mathbb Q}{\d\mathbb P}\Big)\d\mathbb P&\text{if $\mathbb Q\ll\mathbb P$,}\\+\infty&\text{otherwise.}
\end{cases}
\end{equation}
The corresponding divergence risk measure is defined as 
\begin{equation}\label{divergence risk measure rep eq}
\rho(Y):=\sup_{\mathbb Q\ll\mathbb P}\big(\mathbb E_{\mathbb Q}[Y]-I_\varphi(\mathbb Q|\mathbb P)\big),\qquad Y\in L^\infty.
\end{equation}
If $\varphi(x)=x\log x-x+1$, then $I_\varphi(\mathbb Q|\mathbb P)$ is the relative entropy, or the Kullback--Leibler divergence, of $\mathbb Q$ with respect to $\mathbb P$ and $\rho$ is an entropic risk measure. If $\varphi=\infty\cdot\id_{[1/(1-p),\infty)}$ for some $p\in[0,1)$, then $\rho$ is the Expected Shortfall, $\ES_p$.

\begin{theorem}\label{th:divergence risk measures} 
In addition to Assumptions \ref{assm:g}, \ref{assm:gamma}, and \ref{assm:vw} we assume that $v$ and $w$ are bounded. 
Then the divergence risk measure $\rho$ is robust against optimization at $X\in L^0_n$ for $(\mathcal G,L^0_n, \pi^W_n)$. \end{theorem}

The proof of Theorem \ref{th:divergence risk measures} relies on the duality formula from \cite{BenTalTeboulle1,BenTalTeboulle2}, which for general divergence risk measures has only been established on $L^\infty$.  This is one of the reasons for assuming the boundedness of  $v$ and $w$ in Theorem \ref{th:divergence risk measures}. It is  possible, however, to relax their boundedness  by imposing a suitable growth condition, given that the aforementioned duality formula extends to $L^p$ (or, more generally, to a certain Orlicz space). An important special case for which this is possible is Expected Shortfall, $\ES_p$. Recall that the Expected Shortfall   has the following representation (e.g., \citet[Theorem 8.14]{MFE15}) for $Y\in L^1$,
\begin{align}\label{ES representations eq}
\ES_p(Y)&=\frac1{1-p}\int_{p}^1F_Y^{-1}(s)\d s=\sup\Big\{\mathbb E_{\mathbb Q}[Y]:\mathbb Q\ll \mathbb P\text{ and }\frac{\d\mathbb Q}{\d\mathbb P}\le\frac1{1-p}\Big\}.
\end{align}
Note that the right-hand representation coincides with \eqref{divergence risk measure rep eq}
 if $Y\in L^\infty$ and $\varphi=\infty\cdot\id_{[1/(1-p),\infty)}$.
We will say that a function $f:\mathbb R^k\to\mathbb R$ has growth index $q\in[0,\infty]$,  if $f$ is locally bounded for $q=\infty$ and if for $q<\infty$ there exists a  constant  $c$ such that $|f(x)|\le c(1+|x|^q)$ for $x\in\mathbb R^k$.

\begin{corollary}\label{cor:ES} 
In addition to Assumptions \ref{assm:g}, \ref{assm:gamma}, and \ref{assm:vw}, we assume that both $v$ and $w$ have growth index $q\in[1,\infty]$. Then
Expected Shortfall, $\ES_p$, with $p\in(0,1)$,  is robust against optimization at $X\in L^q_n$ for $(\mathcal G,L^q_n, \pi^q_n)$. \end{corollary}

Comparing Theorem \ref{th:var1} and Corollary \ref{cor:ES}, we see that ES has clear advantages over VaR in terms of robustness against optimization.  
Note that Assumptions \ref{assm:g} and \ref{assm:v}   in Section \ref{sec:VaR} and  Assumptions \ref{assm:gamma}  and \ref{assm:vw} in this section are simple regularity conditions, and they are realistic for practical models. 
Therefore, we safely can say that for Problem \eqref{eq:general-opt},  VaR is generally not robust against optimization, and ES is generally robust against optimization. 

In Theorems \ref{th:var1}-\ref{th:divergence risk measures} and Corollary  \ref{cor:ES}, we assumed that $w$ is finite, and thus the admissible positions in the optimization problem \eqref{eq:general-opt}  has an upper bound. 
A similar comparison on the robustness of VaR and ES is obtained for the unbounded problem ($w=\infty$ and $v=-\infty$), which is presented in Appendix \ref{app:unbounded}.


Now we turn to an analysis of the robustness of utility-based shortfall risk measures as introduced in \cite{FS02}. To this end, let  $\ell:\R\to \mathbb R$ be a nonconstant, increasing, and   convex loss  function and $x_0$ be an interior point in the range of $\ell$. The corresponding utility-based shortfall risk measure is given by
$$\rho(Y)=\inf \big \{m\in\mathbb R: \E[\,\ell(Y-m)]\le x_0 \big \},\qquad Y\in L^\infty.
$$

\begin{theorem}\label{th:usr risk measures} 
In addition to Assumptions \ref{assm:g}, \ref{assm:gamma}, and \ref{assm:vw} we assume that $v$ and $w$ are bounded. 
Then 
the utility-based shortfall risk measure $\rho$ is robust against optimization at $X\in L^0_n$ for $(\mathcal G,L^0_n, \pi^W_n)$.
\end{theorem}

A notable special case of a utility-based shortfall risk measure is the expectile of $Y\in L^1$ at level $\tau\in[0,1]$, defined as the unique solution to the equation
$$\tau \mathbb E[\, (Y-z)_+ \,]=(1-\tau)\mathbb E[\, (Y-z)_-\,].
$$
Expectiles were introduced by \cite{NP1987} and have recently gained attention in the context of the discussion on backtesting risk estimates.
As stated\footnote{Note the differing sign convention in \cite{FS16}.} in Exercise 4.9.2.(b) of \cite{FS16}, for $Y\in L^\infty$ and $\tau\in(1/2,1]$, the expectile is equal to the utility-based shortfall risk for the loss function $\ell(x)=\tau x_+-(1-\tau) x_-$. For the special case of an expectile, the following corollary relaxes the boundedness condition by a more general growth condition, just as we did for Expected Shortfall in Corollary \ref{cor:ES}.

\begin{corollary}\label{cor:expectile} In addition to Assumptions \ref{assm:g}, \ref{assm:gamma}, and \ref{assm:vw}, we assume that both $v$ and $w$ have growth index $q\in[1,\infty]$. Then
the expectile at level $\tau\in(1/2,1]$  is robust against optimization at $X\in L^q_n$ for $(\mathcal G,L^q_n, \pi^q_n)$.
\end{corollary}

\begin{remark}\label{opt remark} The results presented in this section rely on obtaining concrete solutions to the problem of minimizing $\rho(g(X))$ over $g\in\mathcal G$. For constant constraint functions, $v$ and $w$, and a coherent risk measure $\rho$, this problem can be formulated as a composite hypothesis testing problem, and there exists a significant amount of corresponding literature; see, e.g., Sections 3.5, 8.3 and the corresponding bibliographical notes in \cite{FS16} for a summary. Much fewer results were obtained for the case of  nonconstant constraint functions, and those that are available often lack some concreteness. A notable exception is \cite{Sekine04}  in a one-dimensional setting which solves \eqref{eq:r1-var1}; see Proposition \ref{pr:Sekine lemma1} below. It is therefore worth pointing out that our proofs also provide structure results for the solutions of our optimization problems. Specifically, in the context of  Corollary \ref{cor:ES}, our proof yields that 
there exists a minimizer $g_X$ that has one of the following two forms, where $z^*\in\mathbb R$ and $c>0$ are suitable constants:
$$g_X(x)=\begin{cases}
(v(x)\vee z^*\wedge w(x))\id_{\{0<c\gamma(x)<1\}}&\text{or}\\
(v(x)\vee z^*\wedge w(x))\id_{\{c\gamma(x)>1\}}+v(x)\id_{\{c\gamma(x)\le 1\}}
\end{cases}
$$
Moreover, in the settings of  Theorem \ref{th:divergence risk measures} and of Theorem \ref{th:usr risk measures}, there exists a constant $z^*$ such that a minimizer $g_X$ is also a minimizer of the expected loss $\mathbb E[\ell(g(X)-z^*)]$ over $g\in\mathcal G$, where $\ell(x):=\sup_{y\ge0}(xy-\varphi(y))$ for Theorem \ref{th:divergence risk measures}. The problem of minimizing this expected loss and its robustness properties will be discussed in the subsequent Section \ref{utility section}.
\end{remark}

Finally, we consider the special setting 
\eqref{eq:r1-var1},  which corresponds to $v(x)=0$ and $w(x)=x$,
for $\rho=\ES_p$, $p\in (0,1)$.
This problem has an explicit solution based on Theorem 8.26 of \cite{FS16}, which is a slight generalization of a result by \cite{Sekine04}. 

\begin{proposition}\label{pr:Sekine lemma1}
Let $p\in (0,1)$ and $\rho=\ES_p$. 
Suppose that Assumption \ref{assm:gamma} holds and $0\le x_0<\E[\gamma (X) X]$. 
There exist constants $d>0$ and $r\ge0$ such that the function
\begin{equation}\label{eq:es-ns-1}
  g_X (x)=x\id_{\{\gamma(x)>d\}}+(x\wedge r)\id_{\{\gamma(x) \le d\}} , ~~x\in \R,
\end{equation}
solves Problem \eqref{eq:r1-var1}. Moreover, $r$ is a $p$-quantile of $ {g_X(X)}$. 
\end{proposition}

For given $X$, $\gamma$, and $r$, one can compute $d$ as a function of $r$  by numerically solving the equation $\E[\gamma (X) g_X(X)]=x_0$. Subsequently, one can find the optimal $r$ by numerically minimizing the expression $\ES_p(  g_X(X))$; see Section \ref{sec:simulation} for its implementation in  simulation studies. 


\section{Robustness of expected utility and loss functions}\label{utility section}

An expected loss $\rho_\ell $ is a mapping $$Y\mapsto\rho_\ell(Y)= \E[\ell(Y)],$$
where  $\ell:\R\to \mathbb R$ is a nonconstant, nondecreasing, and   convex  function.  
Up to a sign change and a constant shift, 
minimizing an expected loss is equivalent to maximizing an expected utility, via the relation $\ell(x)=-u(b-x)$, where $u$ is a concave utility function, and  $b$ is the constant wealth level of the decision maker.
As in Section \ref{rm section}, we consider the set $ \mathcal G$ in \eqref{eq:general-g}.
The  problem of minimizing an expected loss---or, equivalently, of maximizing expected utility---under a budget constraint has  a long history; see, e.g., \cite{EkelandTemam} or, within a financial context, Section 3.3 in \cite{FS16}. In the existing results, the loss function is typically assumed to be continuously differentiable and often required to be strictly convex and to satisfy the Inada conditions (e.g., \citet[p.160]{FS16}). Let us point out that none of these assumptions is imposed here. This level of generality is crucial for us, because our subsequent Theorem \ref{th:r1-3} forms the basis for the proofs of the results in Section~\ref{rm section}. For instance, in Corollary \ref{cor:ES}, it will be applied to the loss function $\ell(x)=x_+=0\vee x$, which clearly satisfies none of the classical requirements.
In what follows, $\ell_+$ and $\ell_-$ are the positive and negative parts of $\ell$, respectively.

\begin{theorem}\label{th:r1-3} Suppose that Assumptions \ref{assm:g}, \ref{assm:gamma}, and \ref{assm:vw} hold. Let $\ell_+$ have growth index $q^+\in[1,\infty]$  and suppose that $w$ has growth index $p\in[1,\infty]$. If, moreover, $\ell$ is not bounded from below, let  $\ell_-$ have growth index $q^-\in[0,1]$  and suppose that the growth index $r$ of $v$ satisfies $r<\infty$ if $q^-=0$ and $r\le pq^+/q^-$ otherwise. Then the expected loss $\rho_\ell$ is robust against optimization at $X\in L^{pq^+}_n$ for $(\mathcal G,L^{pq^+}_n, \pi^{pq^+}_n)$.
\end{theorem}

If $\ell $ is bounded from below, our assumptions allow us to take $v\equiv-\infty$, in which case the lower bound is immaterial. If, on the other hand, both $v$ and $w$ are bounded, then the expected loss $\rho_\ell$ will be robust at $X\in L^0_n$ for $(\mathcal G,L^0_n,\pi^W_n)$.

Theorem \ref{th:r1-3} implies that 
the optimization of expected loss or expected utility is also generally robust under mild regularity conditions. 
This is  similar  to the optimization of convex risk measures (Theorems \ref{th:divergence risk measures} and \ref{th:usr risk measures}), and in sharp contrast to that of VaR (Theorem \ref{th:var1}).

\section{Simulation results}\label{sec:simulation} 

In this section, we illustrate the robustness and non-robustness of  $\ES$ and $\VaR$ against optimization by means of numerical simulations based on the formulas obtained in Propositions \ref{prop:r1-var1} and   \ref{pr:Sekine lemma1} for Problem \eqref{eq:r1-var1}. In our setup, the true risk factor $Z$ is either exponentially or Pareto distributed with an unknown parameter $\theta$.\footnote{The exponential($\theta$) distribution function $F$ with parameter $\theta >0$ is specified as $F(x)=1-e^{-\theta x}$, $x \ge 0$.}  The investor obtains an estimate $\widehat\theta$ for  $\theta$ and considers a corresponding model $X$. Then the investor minimizes $\rho(g(X))$ within the class of all measurable functions $g$ satisfying the inequality $0\le g(x)\le x$ and the budget constraint $\mathbb E[\gamma(X)X]\ge x_0$. We consider both $\rho=\VaR_{0.99}$ and $\rho=\ES_{0.975}$, where the respective levels 0.99 and 0.975 are chosen according to Basel III regulation (\cite{BASEL16}). To keep things simple, we let $\gamma(x)=x$; this choice allows us to compute several auxiliary quantities in closed form, thus reducing the possible impact  of numerical errors.

In Figure~\ref{ES_and_VaR_Pareto5}, we have seen that, in the case of Pareto-distributed risks,  the true risk $\VaR_{0.99}(g_X(Z))$ is substantially larger than the modeled risk  $\VaR_{0.99}(g_X(X))$, as soon as the model distribution underestimates the tail risk probabilities for $Z$. Figure \ref{ES_and_VaR_Exp1} establishes the same effect for exponentially distributed risks, thus showing that the issue persists for light-tailed risks. Taking $\widehat\theta=1$, we observe specifically that   $\VaR_{0.99}(g_X(X))=0.7720$ but $\VaR_{0.99}(g_X(Z))= 4.6098$ if $Z\sim\text{Exp}(0.999)$.  Note that a 0.1\% estimation error in the parameter $\widehat\theta$ leads here to   an increase of almost  500\% for the assessed risk.  As a matter of fact, Figure \ref{ES_and_VaR_Exp1} also shows that any benefits from optimizing $\VaR$ disappear as soon as $\theta<\widehat\theta$, because then   $\VaR_{0.99}(g_X(Z))$ becomes equal to the risk of the unoptimized position, $\VaR_{0.99}(Z)$. In sharp contrast, $\ES_{0.975}(g_X(Z))$ ranges within the narrow interval  $[1.14619,1.15216]$ for $\theta\in[0.1,1.5]$. That is,  the true risk $\ES_{0.975}(g_X(Z))$ deviates from the model value $\ES_{0.975}(g_X(X))$  by a mere $0.5\%$  as long as the true value $\theta$ is within $\pm50\%$ of the estimates value $\widehat\theta$. Throughout that entire interval of $\theta$-values, the optimized position $g_X(Z)$ provides a substantial and robust reduction of risk when compared with the ES of the non-optimized position,  $\ES_{0.975}(X)=4.6889$.\footnote{Since $\ES_{0.975}(Z)\approx \VaR_{0.99}(Z)$ for all $\theta>0 $ as studied by \cite{LW19}, we did not include the  plot of $\ES_{0.975}(Z)$  in  Figure \ref{ES_and_VaR_Exp1}.}

\begin{figure}[h]
\centering
\begin{overpic}[width=9cm]{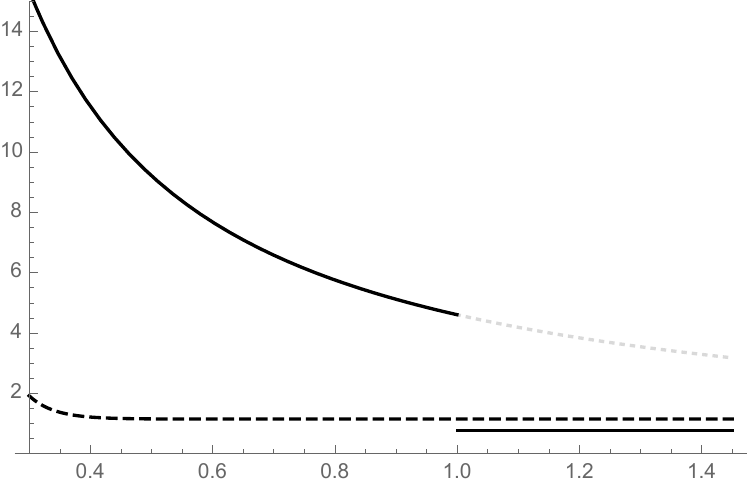}
\put(100,0){$\theta$}
\end{overpic}
\caption{
This plot shows $\rho(g_X(Z))$ for 
 $\rho=\VaR_{0.99}$ (solid) and $\rho=\ES_{0.975}$ (dashed), if $Z$ is exponentially distributed with parameter $\theta$. We assume that $X$ has an exponential distribution with parameter $\widehat\theta=1$. The dotted grey curve corresponds to the VaR of the unoptimized position, $\VaR_{0.99}(Z)$, which coincides with $\VaR_{0.99}(g_X(Z))$ for $\theta<\widehat\theta$.}
\label{ES_and_VaR_Exp1}
\end{figure}

The robustness of risk measures becomes important when risk measurement is combined with statistical estimation. This observation is at the core of the comparative discussion of the robustness of various risk measures; see e.g., \cite{CDS10}. In the following numerical experiment, we illustrate the impact of optimization on robustness. To this end, we refine the preceding setup by allowing for the statistical estimation of the parameter $ \theta$. That is, we generate $n$ iid realizations of $Z$ and   compute the maximum likelihood estimate $\widehat\theta$ from those realizations. Based on the estimated value of $\widehat\theta$, we compute the optimizing function $g_X$ and then compare the true risk value  $\rho(g_X(Z))$ to the perceived risk value $\rho(g_X(X))$. For each $n$, we repeat this procedure 10,000 times  and compute the mean-squared error, i.e., the average of $|\rho(g_X(Z))- \rho(g_X(X))|^2$,  of all 10,000 sample points. As the number $n$ of iid realizations of $Z$ increases, the estimate $\widehat\theta$ becomes ever more accurate, and we may expect the mean-squared error of the risk differences to decrease; this is indeed true for the case of ES, but not true for the case of VaR, because VaR is not robust against optimization. 
 Figure \ref{ParetoMSE fig} shows the corresponding mean-squared errors as a function of $n$ for the case of Pareto-distributed risks. Figure~\ref{ExpMSE fig} shows the analogous computations for exponentially distributed risks. Both figures illustrate that $\ES$ massively outperforms $\VaR$. 

 \begin{figure}[t]
 \centering
 \begin{minipage}{6.7cm}
 \begin{overpic}[width=6.7cm]{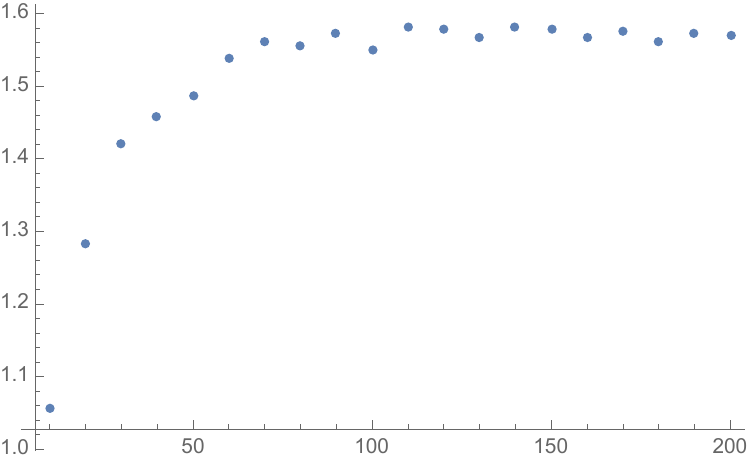}
 \end{overpic}
 \end{minipage}\qquad \begin{minipage}{6.7cm}
 \begin{overpic}[width=6.7cm]{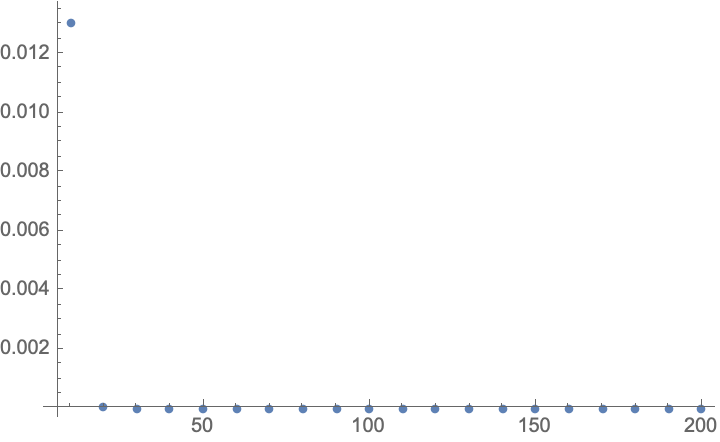}
 \end{overpic}
\end{minipage}\\
 \begin{minipage}{6.7cm}
 \begin{overpic}[width=6.7cm]{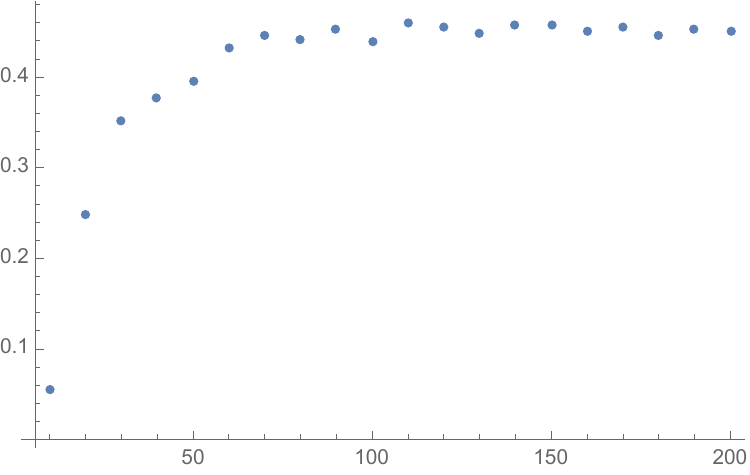}
 \end{overpic}
 \end{minipage}\qquad 
 \begin{minipage}{6.7cm}
  \begin{overpic}[width=6.7cm]{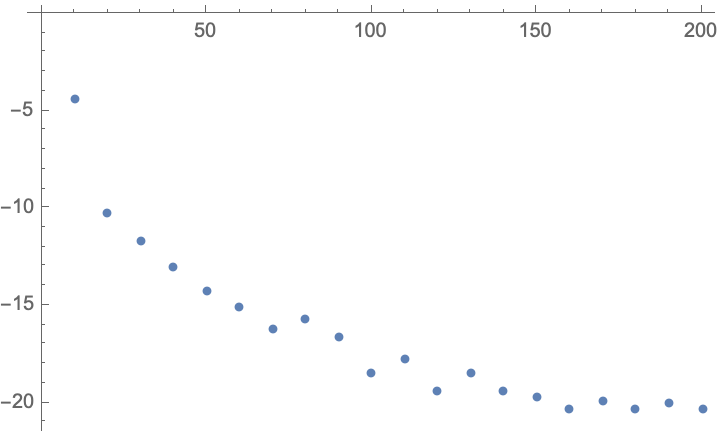}
 \end{overpic}
 \end{minipage}
 \caption{Mean-squared errors of 10,000 independent sample points of $\rho(g_X(Z))$ and $\rho(g_X(X))$, each with a   maximum likelihood estimator $\widehat\theta$ computed from $n$ iid~realizations of the Pareto(5)-distributed risk factor $Z$. The horizontal axis shows the number $n$. The case $\rho=\VaR_{0.99}$ can be found on the left, $\rho=\ES_{0.975}$ is on the right. The two top panels plot the mean-squared error (original values), the bottom ones correspond to their log-transforms. }\label{ParetoMSE fig} 
 \end{figure}

 \begin{figure}[t]
 \centering
 \begin{minipage}{6.7cm}
 \begin{overpic}[width=6.7cm]{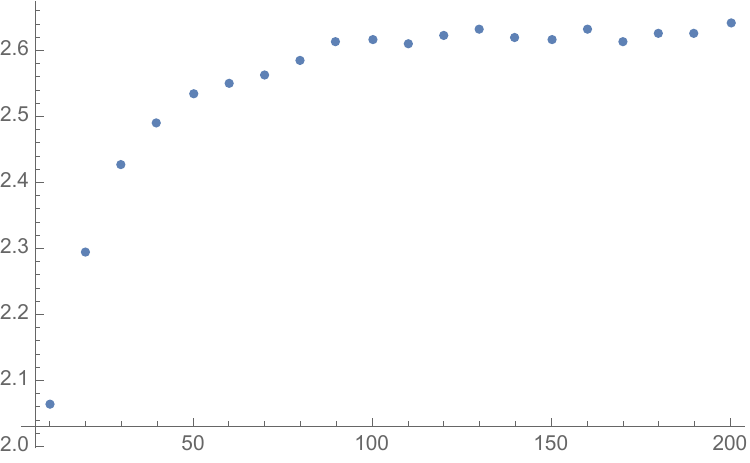}
 \end{overpic}
 \end{minipage}\qquad \begin{minipage}{6.7cm}
 \begin{overpic}[width=6.7cm]{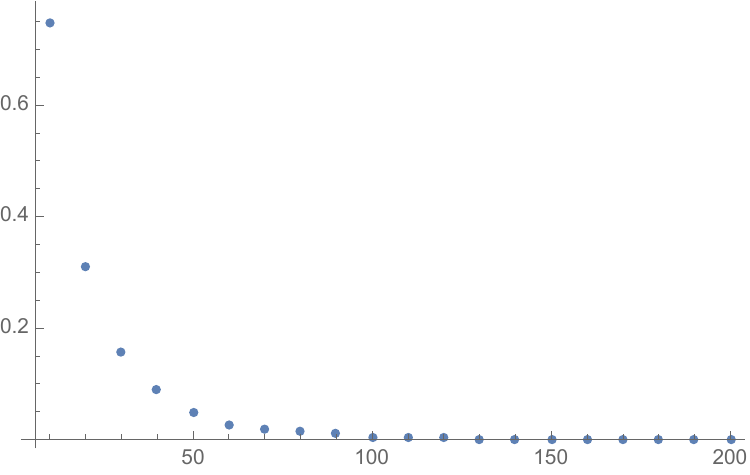}
 \end{overpic}
\end{minipage}\\
 \begin{minipage}{6.7cm}
 \begin{overpic}[width=6.7cm]{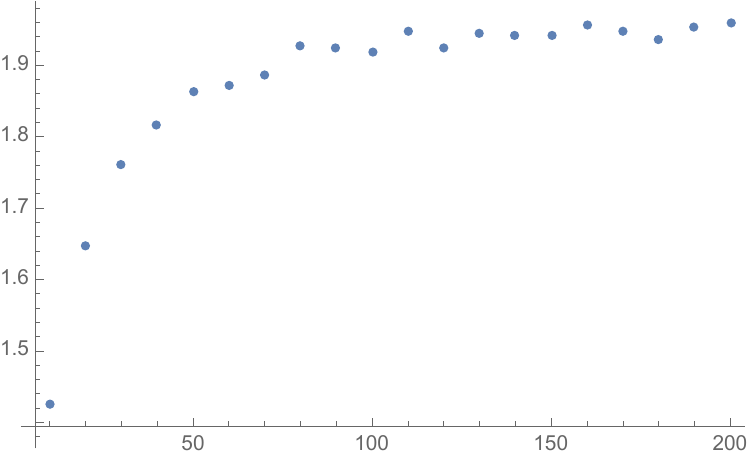}
 \end{overpic}
 \end{minipage}\qquad 
 \begin{minipage}{6.7cm}
  \begin{overpic}[width=6.7cm]{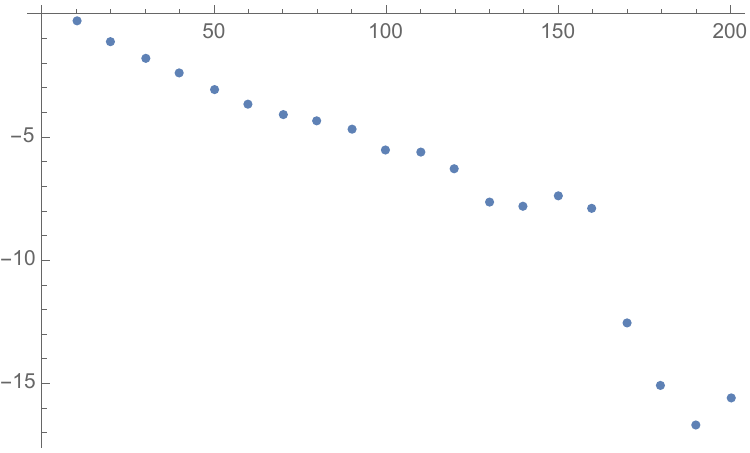}
 \end{overpic}
 \end{minipage}
 \caption{Mean-squared errors of 10,000 independent sample points of $\rho(g_X(Z))$ and $\rho(g_X(X))$, each with a  maximum-likelihood estimator $\widehat\theta$ computed from $n$ iid~realizations of the $\text{Exp}(1)$-distributed risk factor $Z$. The horizontal axis shows the number $n$. The case $\rho=\VaR_{0.99}$ can be found on the left, $\rho=\ES_{0.975}$ is on the right. The two top panels
  plot the mean-squared error (original values), the bottom ones correspond to their log-transforms}\label{ExpMSE fig} 
 \end{figure}

 \section{Discussions and remarks}\label{sec:7}
 \subsection{Implications of our results on regulatory risk measures}
  
In both the banking and the insurance sectors, VaR and ES are competing regulatory risk measures for solvency capital calculation; see, for instance, \cite{BASEL16} from the Basel Committee on Banking Supervision and \cite{IAIS14} from the International Association of Insurance Supervisors.
In this paper, with the new notion of robustness, we see that for the optimization problem \eqref{eq:general-opt}, VaR is generally not robust whereas ES is. This provides  strong support for the use of ES in optimization problems, in addition to its convexity which is very well recognized in the literature. These results   further support the transition from VaR to ES made by \cite{BASEL16} from a novel   theoretical perspective.

Our observations on the VaR vs.~ES issue can be explained intuitively.
From the proof of Theorem \ref{th:var1}, the VaR optimized positions always have a jump at the $p$-quantile level, and the optimized position can roughly be interpreted as a portfolio exhibiting a large loss with a small probability (e.g.~selling a large volume of far out-of-the-money call options).
This reflects the fact that ``VaR does not capture the tail risk" as indicated already by many academics and regulators (see e.g.~\cite{DEGKMRS01}, \cite{EPRWB14}, \cite{EKT15} and \cite{BASEL16}).  If there is model uncertainty around this $p$-quantile level, even if small, it ruins completely the optimality of the position.
This can explain the failure of the investment strategies (based on beliefs in small probabilities of default) of many of the larger banks before the 2008 financial crisis; see, for instance, the report by \cite{ACR10} on this matter.
We note that the optimized positions for VaR and ES may have   similar   forms (Proposition \ref{prop:r1-var1} and \ref{pr:Sekine lemma1}).
By definition, however, ES does not ignore the values of the tail part of the optimal allocation (in contrast to VaR), and this explains why the corresponding   value of the risk measure is not underestimated. 

There are extensive discussions on the robustness of  VaR and ES (although not in the context of optimization of this paper), and it may be fair to keep a balanced view. One important issue on the robustness of ES is the difficulty arising from perturbations  which yield probability distributions that may have infinite first moment; 
this is why in Corollary \ref{cor:ES},  
the robustness of ES  with respect to $\pi_n^q$ requires a condition on the growth rate of $v$ and $w$. Therefore, in the minimization of ES from historical data, one needs to always make suitable integrability assumptions, or otherwise minimizing ES  may be as problematic as the case of VaR. Infinite mean models are not of a purely academic nature in risk management; see for instance \cite{NEC2006} in the context of operational risk and \cite{W09} related to the economics of climate change.
For recent academic discussions on various  issues related to the desirability of VaR and ES in banking and insurance regulation, we refer to  \cite{KP16}, \cite{FZ16}, \cite{ELW18}, \cite{AB18} and the references therein.


\subsection{Remarks on the formulation of robustness}
 \label{sec:remarks} 

In this section we supply some further remarks on the relations between our definition of robustness and related notions in the literature on optimization and model uncertainty. 

  We start by considering the problem of solvency capital calculation of a firm  as set forth in the Basel III/IV and Solvency II agreements. Suppose that $Z$ is the true (but unknown) model and $g_Z\in {\mathcal G_Z(}\rho)$; see \eqref{eq:opt2} with $X=Z$. In   solvency capital calculation, the following quantities have different physical meanings: 
\begin{enumerate}[(a)]
\item $\rho(g_X(X))$: the \emph{perceived} risk value  (solvency capital requirement)   optimized for $X$ by the firm;
\item $\rho(g_Z(Z))$: the  \emph{idealistic} risk value optimized for $Z$ as if $Z$   {were} known;\ 
\item  $\rho(g_X(Z))$:  the \emph{actual} risk value    of the model $Z$, but the optimization is made for $X$.
\end{enumerate} 
Among the above quantities, the idealistic risk  value $\rho(g_Z(Z))$ represents what would be the best-case if the true model were known. Since the true model is not known, this value is not available  and hence irrelevant for the solvency capital calculation.
Therefore, for solvency risk management purposes, we are interested in
the \emph{solvency gap}
\begin{equation}
~~~~\underbrace{\rho(g_{{X}}({Z}))}_{\scriptsize \mbox{actual risk}} ~-~\underbrace{\rho (g_{{X}}({X}))}_{\scriptsize\mbox{ {perceived} risk}}, \label{eq:insolvency}
\end{equation}
not the 
\emph{optimality gap}
\begin{equation}
\underbrace{\rho(g_{ Z}( Z))}_{\scriptsize\mbox{idealistic optimum}}  ~-~\underbrace{\rho (g_{ X}( Z)) }_{\scriptsize\mbox{actual risk}}, \label{eq:optimality}
\end{equation}
nor the \emph{optimality shift} \begin{equation}
\underbrace{\rho(g_{ Z}( Z))}_{\scriptsize\mbox{idealistic optimum}}  ~~-\underbrace{\rho (g_{ X}( X)) .}_{\scriptsize\mbox{{perceived} optimum}}  \label{eq:shift}
\end{equation} 
Note that both \eqref{eq:optimality} and \eqref{eq:shift} involve $\rho(g_Z(Z))$ which is not relevant for solvency considerations. In the optimization literature, the continuity of the set mapping $Z\mapsto  {\mathcal G_Z(}\rho)$, as well as that of the function $Z\mapsto \rho(g_Z(Z)) $, is referred to as the problem of stability, i.e.~how do the optimal solutions and the optimality shift \eqref{eq:shift} change when the underlying model changes from $X$ to $Z$; see, e.g., \cite{BS00} and the references therein.

 {Let us further illustrate our notion of robustness by means of the following two examples.} 

\begin{example}
\label{ex:liquidation}
Suppose that the model $X$ leads to the unique optimal decision  $g_X(x)=x_0\in \R$, which means fully liquidating this asset or a perfect hedge.
In this case, $g_X$ is a constant function, and hence $Z\mapsto \rho(g_X(Z))$ is a constant mapping, thus always robust according to our definition.
In other words, the solvency gap \eqref{eq:insolvency} will be zero,  no matter what the optimizer for the true risk $Z$ is. Hence,  model uncertainty is irrelevant for the calculation of the solvency capital.
On the other hand, 
if the true model $Z$ is not equal to $X$ and liquidating the asset is not optimal for  $Z$, then we have $\rho(g_Z(Z))<\rho(x_0)$; thus
the optimality shift \eqref{eq:shift} will be strictly negative.
Therefore, the  solvency gap in \eqref{eq:insolvency} is the right notion to look at in this scenario, not   the optimality shift. 
\end{example}

 \begin{example}
\label{ex:identical}
Suppose that 
$X$ and $Z$ are similar  {in the sense} that 
$g_X(X)$ and $g_Z(Z)$ are identically distributed,  but $g_X(Z)$  {and $g_X(X)$ are} not identically distributed. 
In this case, we have $\rho(g_X(X)) = \rho(g_Z(Z))$ for any law-invariant risk measure $\rho$, such as $\VaR_p$ or $\ES_p$,
but  $\rho(g_X(Z))>\rho(g_Z(Z))= \rho(g_X(X))$ since $g_X$ is generally not optimal for $Z$.
  Clearly, the solvency gap \eqref{eq:insolvency}
is strictly positive 
and the optimality shift \eqref{eq:shift} vanishes. 
In this situation, the regulator is   concerned about the insolvency issue regarding model uncertainty. Indeed,  the true risk value $\rho(g_X(Z))$ is larger than the perceived risk value $\rho(g_X(X))$, which means that the solvency capital   is insufficient. Note however  that  there is no gap between  $\rho(g_Z(Z))$ and $\rho(g_X(X))$.
Therefore, also here,  the solvency gap is the right notion to study rather than the optimality shift. 
\end{example}


\begin{remark}\label{rem:r1-1} In the following  few remarks, we provide various other interesting comments on our formulation of robustness against optimization.\footnote{We thank an anonymous referee for discussions on these issues, and in particular, pointing out the relationship of our notion and the classic ones via the triangular inequality.} 
\begin{enumerate}
\item \textbf{Triangular inequality.}
 Note    the triangular inequality $$|\rho(g_X(Z)) - \rho(g_X(X))|\le |\rho(g_X(Z)) -\rho(g_Z(Z))| + |\rho(g_Z(Z)) -\rho(g_X(X))|.$$  
This inequality suggests that if both the optimality shift  in \eqref{eq:shift} and the optimality gap  in \eqref{eq:optimality} converges to $0$ as $Z\to X$ in $\pi$, then 
we   have the solvency gap converges to $0$, and thus robustness against optimization. However, the opposite does not hold, as illustrated  by the situation in Example  \ref{ex:liquidation}, where  robustness is always guaranteed, since the optimizer for $X$ is to fully liquidate the asset, although the optimality shift is non-zero.
Hence, the study of   robustness against optimization is not equivalent to the combined study of the optimality shift and stability.  In other words, continuity in both optimality gap and optimality shift is sufficient but not necessary for robustness against optimization. 

\item \textbf{Alternative ways to formulate robustness.}
There are some alternative ways to formulate the notion of robustness in Definition \ref{def:1}.
We discuss them and explain   the advantages of our formulation. 
\begin{enumerate}
\item One may use  uncertainty on the set of probability distributions  instead of that on the set of random vectors.  
There are a few advantages to consider the misspecification of the random vector rather than its distribution. First, our framework is general in the sense that there is no restriction to law-invariant risk measures or utility functions. For the notion of robustness studied in this paper, a probability measure does not need to be specified (non-law-invariant risk measures include  e.g., the margin requirement risk measure used by the Chicago Mercantile Exchange; see \citet[Section 2.3]{MFE15}).  Second, our framework is flexible as  we can  easily incorporate the misspecification of distribution by using a metric on the set of distribution (which is a pseudo-metric on the set of random variables). 
Third, in the proof of several results in the paper on the robustness and non-robustness of risk measures in Sections \ref{sec:VaR} and \ref{rm section}, we need to obtain equalities in the almost sure sense ($\omega$-wise  equalities) to locate the unique form of $g$.   
\item An alternative to Definition \ref{def:1} would be to require all, instead of one,  optimizing functions  $g_X$ to satisfy $\pi$-continuity of $Z\mapsto \rho(  g_X (Z))$ at $Z=X$. 
This requirement would be stronger than the current Definition \ref{def:1}.
For our main result, Theorem \ref{th:var1}, stating VaR is generally \emph{not robust},  
the current formulation in Definition \ref{def:1} gives a stronger result. 
Moreover, requiring continuity of all optimizing functions  may lead to pathological statements. 
For instance, suppose that $g_X\in \mathcal G$ is a continuous optimizing function for ES (or any other risk measure) and $X\in L^1$.  If one modifies $g_X$ on a set of $\mu_X$-measure zero (such as the set of rational numbers), then  the resulting function is still optimal, but robustness fails, and rightly so. 
\end{enumerate}
 
 \item \textbf{Limitations.} 
Robustness against optimization that we study in this paper is a desirable notion of robustness, but it should be seen as a necessary, but generally not sufficient, condition for being a good risk measure to use in the context of optimization.  
For instance, even with the continuity of $g_X(Z)$ at $Z=X$,   small-size perturbations in the model may lead to   enormous changes  in the risk assessment in a practical example, as our notion does not quantify sensitivity of the risk value, which will be a future research direction.

\item \textbf{Optimization over stochastic processes.}
 One can consider continuous-time models where optimizers are chosen over a set of stochastic processes (e.g.,~admissible trading strategies).
Our framework and discussions can be extended to such problems, as long as the optimizers are functions of the random source $X$, be it finite dimensional or infinite dimensional.
In fact, in many classic financial models, the continuous-time optimization problem (such as the hedging example above) can be translated into a single-period optimization problem via the martingale approach as we see Example \ref{ex:ctfinmarket}.
\end{enumerate}
\end{remark}

\section{A connection to distributionally robust optimization}\label{sec:6a}
 
We conclude this paper by discussing our notion of robustness in the context of \emph{distributionally robust optimization}.
Our results in Section \ref{sec:VaR} show that VaR is generally not robust for \eqref{eq:general-opt}.
In a classic setting of distributionally robust optimization (e.g.~\cite{QZ08}, \cite{ZF09}, \cite{BM18}),
the objective functional itself is evaluated under the worst-case value over a set of possible models representing uncertainty.  
By taking the worst-case value of the objective, model uncertainty is   incorporated into the optimization problem.  Some other relevant results on VaR and ES with the worst-case approach  can be found in \cite{HH13} and \cite{ZKR13}. 
We wonder  whether robustness against optimization of risk measures would be improved by taking such an approach.

To formulate this consideration mathematically, let $\rho$ be a compatible objective functional for an uncertainty triplet $(\G, \Z,\pi)$ and $X\in \Z$.
We look at the following optimization problem, which  is a   robust version of \eqref{eq:opt1},
\begin{equation}\label{eq:robopt}\mbox{to minimize: } \sup_{\pi({Y},X)\le \epsilon}\rho(g({Y})) \mbox{~~~~~subject to~} g\in \mathcal G,
\end{equation}
where  $\epsilon>0$.
Denote by $\mathcal G_X(\rho, \epsilon)$ the set of functions $g\in \mathcal G$ minimizing \eqref{eq:robopt}.
Clearly, if we allow $\epsilon=0$ in \eqref{eq:robopt}, then $\mathcal G_X(\rho, 0)= \mathcal G_X(\rho)$ and we are back in the setting of Section \ref{sec:2}.
In the problem \eqref{eq:robopt}, an investor is interested in the risk measure value $\rho(g(Z))$ of the risky position $g(Z)$, in which $Z$ is the unknowable true model.
Therefore,
similarly to Definition \ref{def:1}, we say that the objective functional $\rho$ is robust against optimization for the setting \eqref{eq:robopt} if there exists $g_X\in \mathcal G_X(\rho,\epsilon)$  such that
the function $Z\mapsto \rho(g_X(Z))$ is $\pi$-continuous at $Z=X$.

Unfortunately, the minimax  problem \eqref{eq:robopt} is not easy to solve analytically, even for   the representative settings in Section \ref{sec:3} and in the cases of VaR and convex risk measures.
Typically, a linear programming approach has to be applied for such problems.
As convex risk measures are already shown to be generally robust  in Section \ref{rm section},  its distributionally robust version is also generally robust;
we thus focus on the question of whether $\VaR$ becomes more robust in this context.
Our results in this section should be understood as exploratory rather than conclusive. 

To obtain analytic results, we look at a simple  one-dimensional case of \eqref{eq:general-opt}, by letting
\begin{equation}\label{eq:DRO-g}
\mathcal G=\{ g\in \mathcal G_1: 0\le  g  \le m,~ \E[\gamma(X) g( X)]\ge x_0\},
\end{equation}
where  $x_0$ and $m$ are two  constants satisfying $0\le x_0 < m \E[\gamma(X)]$.
We choose  $(\Z,\pi)=(L^\infty, \pi^\infty_1)$ and formulate the optimization problem
\begin{equation}\label{eq:robopt2}\mbox{to minimize: } \sup_{\pi^\infty_1({Y},X)\le \epsilon}\VaR_p(g({Y}))\mbox{~~over } g\in \mathcal G.
\end{equation}
Similarly to Section \ref{sec:VaR},
denote by $q$ the minimum of \eqref{eq:robopt2}, that is, $$q=\inf\left\{ \sup_{\pi^\infty_1({Y},X)\le \epsilon}\VaR_p(g({Y})): 0\le  g  \le m,~ \E[\gamma(X) g( X)]\ge x_0\right\}.$$
We make the following stronger assumption.

\renewcommand\theassumption{D}
\begin{assumption}\label{assm:6}
$q>0$, $1/2\le p<1$, $X$ has a  decreasing density on its support and $\gamma$ is an increasing function of $X$.\footnote{The monotonicity of $\gamma$ as a function of $X$ has a simple economic meaning. Recall that $X$ represents the loss of an asset. Hence, Assumption \ref{assm:6} requires that the pricing density is larger when the asset has a larger loss.
This requirement is satisfied by classic equilibrium models in the notion of Arrow-Debreu (\cite{AD54}).}
\end{assumption}

Fortunately,  with Assumption \ref{assm:6}, we are able to obtain an explicit form of the solution to Problem \eqref{eq:robopt2}, allowing us to compare the corresponding robustness property with the results we obtained in  Section \ref{sec:VaR}.

\begin{proposition}\label{prop:var-rob-new}
For $\mathcal G$ in \eqref{eq:DRO-g}, under Assumption \ref{assm:6},  Problem \eqref{eq:robopt2} admits a solution  of the form
  \begin{equation} \label{eq:varopt6-bd}
 g_X (x)=m\id_{\{x > c +\epsilon\}}  + q \id_{\{x  \le   c +\epsilon\}}, ~~x\in \R, \mbox{~
where $c=\VaR_{p}(X)$.}
\end{equation}
\end{proposition}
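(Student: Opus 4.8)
The plan is to reduce the distributionally robust problem \eqref{eq:robopt2} to the already-understood non-robust problem \eqref{eq:var-bd} by making the inner supremum explicit. First I would argue that, because $\pi^\infty_1(Y,X)\le\epsilon$ is exactly the condition $\|Y-X\|_\infty\le\epsilon$, the worst-case value $\sup_{\|Y-X\|_\infty\le\epsilon}\VaR_p(g(Y))$ can be computed path by path when $g$ is monotone: if $g$ is nondecreasing, then shifting $X$ upward by $\epsilon$ maximizes $g(Y)$ pointwise, so the supremum is attained at $Y=X+\epsilon$ (or, more carefully, the supremum over the $\epsilon$-ball equals $\VaR_p(g(X+\epsilon))=\VaR_p(\bar g(X))$ where $\bar g(x):=g(x+\epsilon)$). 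For general $g$ one still has $\sup_{\|Y-X\|_\infty\le\epsilon}\VaR_p(g(Y))\ge\VaR_p(g(X+\epsilon))$, and one also needs a matching upper bound; here I would use the translation-and-monotonicity structure together with the observation that for the purpose of minimizing the worst-case VaR, one may restrict attention to functions $g$ that are nondecreasing in $x$ on the relevant range (this uses Assumption \ref{assm:6}: $X$ has a decreasing density and $\gamma$ is increasing, so a rearrangement/Hardy--Littlewood type argument shows the optimizer can be taken monotone, exactly as in the solution of \eqref{eq:var-bd} in Proposition \ref{prop:var-bd-sol} and Corollary \ref{coro:2}).

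The second step is then to identify the reduced problem. Having shown $\sup_{\|Y-X\|_\infty\le\epsilon}\VaR_p(g(Y))=\VaR_p(g(X+\epsilon))$ for the monotone candidates, Problem \eqref{eq:robopt2} becomes: minimize $\VaR_p(g(X+\epsilon))$ over $g\in\mathcal G_1$ with $\E[\gamma g(X)]\ge x_0$ and $0\le g(X)\le m$. Writing $h(x):=g(x+\epsilon)$, i.e. $g(x)=h(x-\epsilon)$, and noting that $\gamma=\gamma(X)$ is an increasing function of $X$ while the budget constraint is unchanged in form, this is structurally the \emph{same} problem as \eqref{eq:var-bd} but with the quantile threshold displaced by $\epsilon$. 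Concretely, by Corollary \ref{coro:2} (whose Assumption \ref{assm:3p} is implied by Assumption \ref{assm:6}, since a decreasing density makes $\gamma(X)$ continuously distributed and forces $\p(\gamma\le\VaR_p(\gamma))=p$), the optimizer of the displaced problem has the bang-bang form $h(y)=m\id_{\{\gamma(y)>c'\}}+q\id_{\{\gamma(y)\le c'\}}$ with $c'=\VaR_p(\gamma)$. Since $\gamma$ is increasing in its argument and $X$ has a decreasing density, $\{\gamma(X)>c'\}=\{X>\VaR_p(X)\}$ up to a null set, so $h(y)=m\id_{\{y>\VaR_p(X)\}}+q\id_{\{y\le\VaR_p(X)\}}$, and therefore $g(x)=h(x-\epsilon)=m\id_{\{x>\VaR_p(X)+\epsilon\}}+q\id_{\{x\le\VaR_p(X)+\epsilon\}}$, which is exactly \eqref{eq:varopt6-bd} with $c=\VaR_p(X)$.

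The last step is bookkeeping: verify that this $g_X$ is feasible for the original problem (the budget constraint $\E[\gamma g_X(X)]\ge x_0$ holds by the construction of $q$ as the infimum, exactly as $q'$ was handled in Section \ref{sec:4}), and verify that its worst-case value equals the claimed optimum $q$, so that no nonmonotone competitor can do better — this is where the reduction in Step 1 must be airtight rather than merely giving a lower bound. I expect the main obstacle to be precisely this: justifying that the inner $\sup$ over the $L^\infty$-ball is \emph{attained at the upward shift} $Y=X+\epsilon$ for the relevant class of $g$, and that it suffices to optimize over monotone $g$. The subtlety is that $\VaR_p$ is not monotone with respect to pointwise domination of arbitrary random variables in a way that interacts cleanly with the ball constraint — one really needs the comonotonicity/rearrangement input supplied by Assumption \ref{assm:6} (decreasing density of $X$, $\gamma$ increasing, $p\ge 1/2$) to push the worst case to the deterministic shift and to reduce to the monotone case. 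Once that structural lemma is in hand, the remainder is a direct appeal to the already-proved Proposition \ref{prop:var-bd-sol} / Corollary \ref{coro:2}.
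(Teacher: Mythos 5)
Your plan has the right picture of the answer but leaves a genuine hole at the decisive step, and it is not the route the paper takes. The first problem is the monotonization. You propose to restrict to nondecreasing $g$ ``exactly as in'' Proposition \ref{prop:var-bd-sol} / Corollary \ref{coro:2}, but that rearrangement argument works there because the objective $\VaR_p(g(X))$ is law-invariant in $g(X)$: rearranging $g$ comonotonically with $\gamma$ leaves the objective untouched and only helps the budget. In \eqref{eq:robopt2} the objective $\sup_{\pi^\infty_1(Y,X)\le \epsilon}\VaR_p(g(Y))$ depends on the function $g$ itself, not merely on the law of $g(X)$, so rearranging $g$ changes the worst case and the Hardy--Littlewood step alone does not justify passing to monotone candidates. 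Relatedly, the worst case is attained at the deterministic shift $Y=X+\epsilon$ only for monotone $g$; for a candidate that is large on a set $K$ with gaps, the adversary pushes mass into $K$ from both sides, and the relevant object is $\sup_{\pi^\infty_1(Y,X)\le\epsilon}\p(Y\in K)=\p(X\in K_\epsilon)$ with the two-sided $\epsilon$-enlargement. The paper's proof is built around exactly this: it first reduces to layered candidates $h_K=m\id_K+a_K\id_{K^c}$ over compact $K$ (via inner regularity), computes the worst-case probability of such sets (Lemma \ref{lem:rob-opt(i)}), and then proves the comparison $\p(X>\VaR_p(X)+\epsilon)\ge\p(X\in K)$ for every admissible $K$ (Lemma \ref{lem:rob-opt}), which is precisely where the decreasing density and $p\ge 1/2$ enter. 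You explicitly flag this as ``the main obstacle'' but do not supply the argument, so the core of the proof is missing.

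The second problem is the appeal to Corollary \ref{coro:2} for the ``displaced problem.'' Even granting monotonization, minimizing $\VaR_p(g(X+\epsilon))$ subject to $\E[\gamma(X) g(X)]\ge x_0$ and $0\le g(X)\le m$ is not an instance of \eqref{eq:var-bd}: the objective and the constraint evaluate $g$ at different arguments, and no substitution of the form $h(x)=g(x+\epsilon)$ removes this mismatch, so Corollary \ref{coro:2} cannot be cited off the shelf (also, Assumption \ref{assm:6} does not obviously imply Assumption \ref{assm:3p}, since an increasing but not strictly increasing $\gamma$ may give $\gamma(X)$ an atom). What does close the argument, and what the paper does, is a direct comparison: by Lemma \ref{lem:rob-opt} any admissible compact layer $K$ satisfies $\p(X\in K)\le \p(X>c+\epsilon)$, and then the upper Hardy--Littlewood inequality (using that $\gamma$ is increasing in $X$) shows that the upper-level-set candidate with threshold $c+\epsilon$ admits the smallest feasible constant level $a^*$, hence dominates every $h_K$ and yields \eqref{eq:varopt6-bd}. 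So your heuristic among monotone bang-bang candidates identifies the right threshold, but the proof requires the enlargement and comparison lemmas rather than a reduction to the non-robust solution.
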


With the solution $ g_X $ in Proposition \ref{prop:var-rob-new}, the continuity of VaR mentioned in Section \ref{sec:3} implies that the mapping $Z\mapsto \VaR_p( g_X (Z))$ is $\pi^\infty_1$-continuous at $Z=X$.
As a consequence, $\VaR_p$ is robust against optimization for the setting \eqref{eq:robopt}.
This observation is in sharp contrast with Theorem   \ref{th:var1}, where we see that $\VaR_p$ is not   robust for $(\mathcal G,L^\infty,\pi^\infty_1)$ under some very weak assumptions (which does not conflict Assumption \ref{assm:6}).
Therefore, at least for the special setting \eqref{eq:robopt}, the modified optimization problem \eqref{eq:robopt2} improves the robustness of VaR.
It is unclear how this result can be generalized to other optimization problems, as analytic results for \eqref{eq:robopt} are rarely available.

Although $\VaR_p$ becomes robust in the setting \eqref{eq:robopt2}, its optimizing function takes a similar form as in Proposition \ref{prop:r1-var1}, that is, a distribution with a jump and a big loss with small probability.
Since the distribution of $g_X(X)$ in \eqref{eq:varopt6-bd} has a jump at its $(p+\epsilon)$-quantile, this type of optimizing functions is highly undesirable and is subject to considerable model uncertainty if $\epsilon$ is small; see the discussions in Section \ref{sec:7}.

 ~
 
\noindent{\textbf{Acknowledgements.}
The authors  thank 
Rama Cont, Xiaoxue Deng, Paul Glasserman, Liyuan Lin, Marcel Nutz and Philip Protter  for insightful comments on an early version of the paper.
In particular, the term ``robustness against optimization" was suggested by Paul Glasserman.
RW acknowledges financial support from the Natural Sciences and Engineering Research Council of Canada (NSERC, RGPIN-2018-03823, RGPAS-2018-522590)
 and from the Center of Actuarial Excellence Research Grant from the Society of Actuaries. 
 }

     \setcounter{lemma}{0}
     \renewcommand{\thelemma}{A.\arabic{lemma}} 
          \setcounter{proposition}{0}
     \renewcommand{\theproposition}{A.\arabic{proposition}}
               \setcounter{corollary}{0}
     \renewcommand{\thecorollary}{A.\arabic{corollary}}
               \setcounter{example}{0}
     \renewcommand{\theexample}{A.\arabic{example}} 
                    \setcounter{equation}{0}
     \renewcommand{\theequation}{A.\arabic{equation}} 
     
 \appendix

\section{Proofs of theorems and propositions}

\subsection{Proofs in Section \ref{sec:2}} \label{app:a1}

\begin{proof}[Proof of Proposition \ref{prop:31}]
It suffices to show that the function $Z\mapsto \rho(g(Z))$ is $\pi$-continuous. By definition, for any $X,Y\in \Z$,
$
\pi_{g(\Z)}(g(X),g(Y))= \pi (X,Y).
$
Thus the $\pi_{g(\Z)}$-continuity of $\rho$ is equivalent to the $\pi$-continuity of the function $Z\mapsto \rho(g(Z))$.
\end{proof}

\begin{proof}[Proof of Proposition \ref{prop:32}]
\begin{enumerate}[(i)]
\item
It suffices to show that,  as $k\to\infty$, $X_k\to X$ in $\pi_n^\infty$ implies that $g(X_k)\to g(X)$ in $\pi_1^\infty$.
This is a direct consequence of the  Heine-Cantor Theorem (see Theorem 4.19 of \cite{R76}).
\item $X_k\to X$ w.r.t.~$\pi^q_n$ implies that $\{|X_k|^q\}_{k\in \N}$ is uniformly integrable and that $X_k\to X$ in probability. It follows from the Continuous Mapping Theorem that $g(X_k)\to g(X)$ in probability. Moreover, for sufficiently large $c$,
$$\mathbb E\Big[|g(X_k)|^q\id_{\{|g(X_k)|>c\}}\Big]\le C^q\mathbb E\Big[|X_k|^q\id_{\{|X_k|>c/C\}}\Big].
$$
Therefore, $(|g(X_k)|^q)$ is uniformly integrable and, in turn, $g(X_k)\to g(X)$ w.r.t.~$\pi^q_1$.
\item It suffices to show that, as $k\to\infty$, $X_k\to X$ in $\pi_n^W$ implies that $g(X_k)\to g(X)$ in $\pi_1^W$.
This is a direct consequence of  the Continuous Mapping Theorem. \qedhere
\end{enumerate}
\end{proof}

\subsection{Proofs in Section \ref{sec:VaR}} \label{app:r1-1}

Since a rescaling of $\gamma$ does not change the optimization problem \eqref{eq:general-opt}, we will safely assume $\E[\gamma(X)]=1$ in the proofs of all results in Sections \ref{sec:VaR}-\ref{utility section}.

\begin{proof}[Proof of Theorem \ref{th:var1}]
In what follows,   equalities and inequalities on functions on $\R^n$ are understood as almost surely with respect to $\mu_X$,
and essential suprema and expectations of these functions are taken under $\mu_X$ (we use $\E_X$ to emphasize the expectation with respect to $\mu_X$).

There is nothing to show if the set $\mathcal G_X(\rho)$ of minimizers  is empty. 
Suppose that 
 $g_X\in \mathcal G $ is a minimizer  to the problem \eqref{eq:opt1}.
We will show that $Z\mapsto \rho(g_X(Z))$ cannot be continuous at $X$, which gives the statement in the theorem.

We first show that  the budget constraint is always binding, that is,
  \begin{equation}\label{eq:binding}
 \E_X[\gamma g'_X]=x_0 \mbox{~~for any optimizer $g'_X$ to  \eqref{eq:opt1}}.
 \end{equation}  
Suppose $\E_X[\gamma g'_X] > x_0$ for contradiction.
Denote by $\epsilon = \E_X[\gamma g'_X] - x_0$,
$v_0= \esssup(v)$,
and let $g''_X = (g'_X-\epsilon)\vee v$.
Since $v\le g''_X \le w$ and $\E_X[\gamma g''_X] \ge \E_X[\gamma g' _X]-\epsilon =  x_0$, we have $g''_X \in \mathcal G$.
Moreover,  since $v_0 < \VaR_p(g'_X(X))$ by Assumption \ref{assm:v}, we have 
\begin{align*}
\VaR_p(g''_X(X))  & \le \VaR_p((g'_X(X)-\epsilon)\vee v_0 ) \\ & = \VaR_p( g'_X(X)-\epsilon) \vee v_0 \\& = (\VaR_p(g'_X(X))-\epsilon)\vee v_0< \VaR_p(g'_X(X)).
\end{align*} 
This contradicts the optimality of $g'_X$.  
Hence,  \eqref{eq:binding} holds.

Consider the probability space $(\mathbb R^n,\mathscr{B}(\mathbb R^n),\mu_X)$, where $\mathscr{B}(\mathbb R^n)$ is the Borel $\sigma$-field. Following \cite{WZ20}, a set $A\in\mathscr{B}(\mathbb R^n)$ is called a $p$-tail event for a measurable function $h:\mathbb R^n\to\mathbb R$ and $p\in(0,1)$, if $\mu_X(A)=1-p$ and  $h(x)\ge h(x')$ for $\mu_X$-a.e.~$x\in A$ and $x'\in A^c$. The existence of the tail event in any atomless probability space is implied by Lemma A.3 of \cite{WZ20}, noting that Assumption \ref{assm:g} guarantees that $(\mathbb R^n,\mathscr{B}(\mathbb R^n),\mu_X)$ is an atomless probability space.

 Let $A$ be a $p$-tail event of $g_X$ so that  $\p(X\in A)  =\mu_X(A)= 1-p$.
By the definition of $\VaR_p$, we have 
\begin{equation}
\label{eq:var-eq1}
\VaR_p(g_X(X)) = \esssup (g_X  | A^c),
\end{equation}  
where $\esssup (g_X|A^c)$ is the essential supremum (with respect to $\mu_X$) of $g_X$ conditional on $A^c$. 
Define the function $\hat g=g_X \id_{A^c} + w \id_{A}$. 
It is clear that $\hat g\ge g_X$. Moreover, since $\hat g$ and $g_X$ only differ on the  tail event  $A$, \eqref{eq:var-eq1} implies that $\VaR_p(\hat g(X)) = \VaR_p(g_X(X))$. 
Hence, $\hat g$ is also a minimizer to \eqref{eq:opt1}. 
Note that if $\hat g\ne g_X$, then we have $\E_X[\gamma \hat g] >  \E_X[\gamma g_X] =x_0$.
By \eqref{eq:binding}, the budget constraint is always binding and we conclude that 
  $\hat g=g_X$. Thus,
$g_X\id _A = w  \id_{A}$. 
 
 Next, suppose by way of contradiction that the quantile function of $g_X(X)$ is continuous at $p$. 
If a $p$-tail event $A'$ of $w$ is $\mu_X$-a.s.~equal to $A$, then we have, using $g_X\id _A = w \id_{A} = w \id_{A'}$,  
$$\VaR_p(w  (X)) \le \lim_{q\downarrow p}  \VaR_q(w  (X)) 
=  \lim_{q\downarrow p}  \VaR_q(  g_X(X))  = \VaR_p(g_X(X)),$$
a contradiction to $\VaR_p(g_X(X)) < \VaR_p(w (X))$ in Assumption \ref{assm:v}. 
Hence, any $p$-tail event $A'$ of $ {w}$ satisfies $A'\ne A$.
 Since both sets have the same probability, the set $C:=A'\setminus A$ must be such that $\alpha:=\mu_X(C)\in (0,  1-p]$. 

Write $a= \VaR_p(g_X(X))$ and $b=\VaR_p(w(X))$.
For each $\delta\in (0,\alpha)$, let $C_\delta$ be a subset of $C$ such that $\mu_X(C_\delta)=\delta$, 
$A_\delta$ be a  ($p+\delta$)-tail event   of $ {g_X}$, and
$B_\delta := A\setminus A_\delta$.
Note that $w\ge b$ and $g_X\le a$ on $C_\delta$  since $C\subset A'\setminus A$. 
Hence, $w-  g_X\ge b-a>0$  on $C_\delta$.
Moreover, $C_\delta \cap B_\delta=\varnothing$.  
Since $\gamma >0$, 
 $\mu_X(\gamma \ge \epsilon)\to 1$ as $\epsilon\downarrow 0$. 
 As a consequence, we can freely choose $ C_\delta$ such that $\gamma$ is bounded away from $0$ on $  C_\delta$ for each $\delta \in (0,\alpha)$.   
 Since $\gamma$ is bounded from above, we can let
  $\ell ,u $ be two constants such that 
$0<\ell <\gamma$ on $ C_{\alpha/2}$
and 
$\gamma<u<\infty$ on $B_{\alpha/2}$.

Let 
$
g_\delta = a \id_{ B_\delta}  + w \id_{C_\delta} + g_X (1-\id_{B_\delta \cup C_\delta})$.
In other words, $ {g_\delta}$ is obtained by 
decreasing the value of $g_X$ to $a$ on the set $  B_\delta$  of probability $\delta$, and increasing its value to $w$ on the set $C_\delta$   also of probability $\delta$.  
Clearly, $v\le g_\delta \le w$. 
Note that $g_X \le a$ on $A^c$, which implies   
$g_\delta\le a$ on $ A^c \setminus C_\delta$.
Moreover, $g_\delta\le a$
on $B_\delta$. 
Therefore, 
 $$ {\p(g_\delta(X)\le a) \ge  \mu_X(B_\delta) + \mu_X(  A  ^c \setminus C_\delta)=1-p},$$ which gives 
$
\VaR_p(g_\delta(X)) \le   a.
$  
Since the quantile of $g_X(X)$ is continuous at $p$,    there exists $\delta_0\in (0,\alpha/2)$ such that $|g_X-a|<  (b-a)\ell /u$ on $  B_{\delta_0}$.  
Putting the above observations together,
we have 
\begin{align*}
\E[\gamma g_{\delta_0}] - \E[\gamma g_X]  & = \E[\gamma (g_{\delta_0}-g_X)\id_{C_{\delta_0}}] - 
\E[\gamma (g_X-g_{\delta_0})\id_{B_{\delta_0}}]  
\\&\ge (b-a) \E[\gamma \id_{C_{\delta_0}}] - \frac{\ell }{u}  (b-a) \E[\gamma \id_{B_{\delta_0}}] 
\\& \ge (b-a) \ell \delta_0 -  \frac{\ell }{u}   (b-a) u \delta_0 
>0.
\end{align*} 
The facts that $\VaR_p(g_{\delta_0}(X)) \le    a=\VaR_p(g_X(X))$
and $\E_X[\gamma g_{\delta_0}]  > \E_X[\gamma g_X] =x_0$ 
further guarantees that $g_{\delta_0}$ is an optimizer to \eqref{eq:opt1}. 
However, this contradicts the fact that any optimizer to \eqref{eq:opt1} needs to satisfy 
\eqref{eq:binding}.
This contradiction shows the desired conclusion that  the quantile function of $g_X(X)$ has a jump at $p$.

 For $\epsilon>0$, let $\tilde A_\epsilon=\{x\in \R^n: d(x,A)\le \epsilon\}$, where $d$ is the Euclidean distance. 
For each $y\in \tilde A _\epsilon$, let $f_\epsilon(y)$ be a Borel function which maps $y$ to one of its nearest point in $A$; see e.g., \cite{JR85} for the existence of the Borel selector. 
 Define the random variables $Z_\epsilon$ by 
 $$
 Z_\epsilon = f_\epsilon(X)\id_{\{X\in \tilde A  _\epsilon\}} + X \id_{\{X\in \tilde A_\epsilon^c\}}.
 $$
 Note that $\pi_n^\infty(Z_\epsilon, X)\le \epsilon$. Hence, $Z_\epsilon \to X$ as $\epsilon \downarrow 0$ in $\pi^\infty_n$, which is the strongest metric $\pi$ that we consider. Moreover,   by Assumption \ref{assm:g}, $X$ has positive density over its support which is a convex set,
which implies $\p(Z_\epsilon \in A)= \p(X\in \tilde A_\epsilon)>\p(X\in A)=1-p$.
Also note that if $Z_\epsilon \in A$, then $g_X(Z)  \ge \lim_{q\downarrow p}  \VaR_q(  g_X(X))  $.
 Hence, 
 $$\VaR_p(g_X(Z_\epsilon))  \ge \lim_{q\downarrow p}  \VaR_q(  g_X(X))  > \VaR_p(g_X(X)),$$
 showing that $Z\mapsto \rho(g_X(Z))$ is not $\pi$-continuous at $X$.  
\end{proof}

\begin{remark} \label{rem:var-assm}
The assumption $\esssup(v)<\VaR_p(X;\mathcal G)$ in Assumption \ref{assm:v} is not essential. 
As we see from the proof, 
this assumption is used to show two conditions.
First, it is used to show \eqref{eq:binding}; i.e., the budget constraint is binding.  
 Second, it is used to guarantee that $v\le g_\delta \le w$ for $\delta$ small enough.
 These two conditions are both natural and quite weak.
 On the other hand, the assumption that $\gamma$ is bounded above is only used to guarantee that 
 $\gamma<u $ for some $u>0$ on the set $B_{\delta}$ with probability $\delta\downarrow 0$. 
Note that if $g_X(X)$ is continuously distributed, then the set $B_{\delta}$ is $\mu_X$-a.s.~equal to
$
\{x\in \R^n: \VaR_p(g_X(X))< g_X(x)<\VaR_{p+\delta} (g_X(X))\}.
$
Hence, in this case it suffices to assume that $\gamma$ is bounded from above in any small neighbourhood of $\{x\in \R^n: g_X(x)=\VaR_p(g_X(X))\}.$  This assumption is practically always satisfied. 
\end{remark}

The following lemma is needed to show Proposition \ref{prop:r1-var1}.
\begin{lemma} \label{lem:varopt3-ns} Under Assumptions \ref{assm:g} and \ref{assm:v},  Problem \eqref{eq:r1-var1} admits at least one solution.
\end{lemma}

\begin{proof}[Proof of Lemma \ref{lem:varopt3-ns}]
Define $\mathbb{Q}$ through $\d\mathbb{Q}/\d\mathbb{P}=\gamma$ and let $\mu=\mathbb{Q}\circ X^{-1}$. The set $\mathcal G $ is then a uniformly integrable subset of $L^1(\mu)$. Let $\{g_n\}_{n\in \N}$ be a minimizing sequence for $\VaR_p$ in $\mathcal G $. By the Dunford-Pettis and
Eberlein-\v Smulian theorems (Theorems IV.8.9 and V.6.1 of \cite{DS58}), there exists a subsequence $\{g_{n_k}\}_{k\in \N}$ that converges weakly in $L^1(\mu)$ to some function $ g_X \in L^1(\mu)$.  Since $\mathcal G$ is convex and closed in $L^1(\mu)$, we get $ g_X \in \mathcal G$. Moreover, weak convergence in $L^1(\mu)$ implies clearly that the laws of $g_{n_k}(X)$ converge weakly to the one of $ {g_X(X)}$. But $\VaR_p$ is a left-hand quantile and hence  lower semicontinuous with respect to weak convergence (see, e.g., Exercise A.6.1 in \cite{FS16}). This proves that $ g_X $ is optimal.
%
%
 \end{proof}

\begin{proof}[Proof of Proposition \ref{prop:r1-var1}]
It is straightforward to check that $\VaR_p(g_X(X))\le q$.   
By Lemma \ref{lem:varopt3-ns}, Problem \eqref{eq:r1-var1} has at least one optimizer. 
Let $g\in \mathcal G$ be an optimizer to \eqref{eq:r1-var1}.
Since $\VaR_p(g(X))=q$ and $g(X)\le X$, we have, in the sense of $\mu_X$-a.s., 
$$
g(X) \le X \id_A + (X\wedge q )\id_{A^c},
$$
where $A$ is a $p$-tail event of $g(X)$. Clearly, by taking an equality in the above inequality will only increase $\E_X[\gamma g]$ while maintaining $\VaR_p(g(X))\le q$, and it does not affect the optimality of $g$. 
Moreover, as we have seen in the proof of Theorem \ref{th:var1}, the budget constraint is binding; this implies that we cannot strictly increase  $\E_X[\gamma g]$ while maintaining $\VaR_p(g(X))\le q$.
Therefore, it has to be $g(X) = X \id_A + (X\wedge q )\id_{A^c}$.
Note that 
\begin{align}
\E[\gamma(X) g(X)] & = \E[\gamma(X) X\id_A] + \E[\gamma(X) (X\wedge q) \id_{A^c}] 
\notag
\\ &= \E[\gamma(X) X] - \E[ \gamma(X) (X-q)_+ \id_{A^c}] = \E[\gamma(X) X] - \E[ Y_+ \id_{A^c}] .
\label{eq:r1-var4}
\end{align}
Maximizing the above term over $A$ satisfying $\p(A)=1-p$, it is clear that the maximum of $\E[\gamma(X) g(X)]$ 
is attained when $Y_+ $ takes its smallest values on $A^c$.
In other words, $A$ is a $p$-tail event of $Y_+$. 
Moreover,  we have $q<\VaR_p(X)$ by Assumption \ref{assm:v}. Hence, $Y=Y_+>0$ on $A$, and $A$ is also a $p$-tail event of $Y$. 
Using again the fact that the budget constraint is binding,
any function $g'$  that does not maximize $\E_X[\gamma g']$ with fixed $\VaR_p(g'(X))=q$ cannot be an optimizer.   
Since the $p$-tail event of $ Y_+$ is unique by $\p(Y\le \VaR_p(Y))=p$,
we know that $g=g_X$ in \eqref{eq:r1-varopt4-ns} is the unique $g$ with $\VaR_p(g(X))=q$ such that $\E_X [\gamma g]=x_0$. 
As a consequence, $g=g_X$ is
 the $\mu_X$-a.s.~unique solution to \eqref{eq:r1-var1}. 
 
Finally, we show \eqref{eq:r1-var3}.  
Since  $A$ is a $p$-tail event of $Y_+$, 
  we know that $A^c$ is a $(1-p)$-tail event of $- Y_+ $.
 Further, we have $\E[-  Y_+  |A^c] = \ES_{1-p} (-  Y_+)$ by  Lemma A.7 of \cite{WZ20}.
Using the fact that the budget constraint is binding and \eqref{eq:r1-var4},  we obtain
 \begin{align*}
 x_0 & =
\E[\gamma(X) g_X(X)] \\& = \E[\gamma(X) X] - \E[  Y_+  \id_{A^c}] 
\\& = \E[\gamma(X) X] + p\E[-  Y_+  |A^c] 
 =\E[\gamma(X)X] + p\ES_{1-p} (- Y_+).
\end{align*} This gives the desired equality \eqref{eq:r1-var3}.
\end{proof}

\subsection{Proofs in Section \ref{rm section}} \label{app:r1-2}

 \begin{proof}[Proof of Theorem \ref{th:divergence risk measures}] 
 %
 In a first step, we show the existence of a minimizer in $\mathcal G$. Clearly, $\mathcal G$ is nonempty.  Let $\{g_n\}_{n\in\mathbb N}$ be a sequence in $\mathcal G$ such that $\rho(g_n(X))$ converges to $\lambda:=\inf_{g\in\mathcal G}\rho(g(X))$.
Since $ g_n$ takes values between $v$ and $w$, a standard Koml\'os-type argument (e.g., Lemma 1.70 in \cite{FS16}), allows us to pass to a sequence  $\{\widetilde g_n\}_{n\in\mathbb N}$ of convex combinations of the $g_n$ such that $\widetilde g_n$ converges  $\mu_X$-a.s.~to some function $g_0$. 
Dominated convergence yields that 
\begin{equation}\label{budget constraint satisfied eq} 
\mathbb E  [\gamma (X)g_0 (X) ]=\lim_{n\uparrow\infty}\mathbb E  [\gamma(X)  \widetilde g_n (X) ]\ge \liminf_{n\uparrow\infty}\mathbb E _X [\gamma (X) g_n (X) ]\ge x_0.
\end{equation}
Hence, $g_0$ belongs to $\mathcal G$. 
The convexity of $\rho$ implies that $\rho(\widetilde g_n(X))$  converges to the infimum value $\lambda$. Moreover, since $\rho$ enjoys the Fatou property due to \eqref{divergence risk measure rep eq}, we have $\rho(g_0(X))\le\liminf_n\rho(\widetilde g_n(X))$. Therefore $g_X:=g_0$ is a minimizer in $\mathcal G$. 

Now we derive the structure of $g_X$. To this end, consider $\ell(x):=\sup_{y\ge0}(xy-\varphi(y))$.
Then $\ell$ is convex, nondecreasing, nonconstant, and finite on $\mathbb R$, because $\varphi$ has superlinear growth. Moreover, Theorem 4.2 in \cite{BenTalTeboulle2} (see also Lemma 1 in \cite{BenTalTeboulle1}) provides the following dual representation,
\begin{equation}\label{OCE dual rep eq}
\rho(Y)=\inf_{z\in\mathbb R}\big(\mathbb E[\ell(Y+z)]-z\big),\qquad Y\in L^\infty.
\end{equation}
We claim that the infimum in \eqref{OCE dual rep eq} is actually attained. To see why, note first that our assumptions on $\varphi$ imply that there exists $y_0>1$ such that $\varphi(y_0)<\infty$, which implies that the slope of $\ell$ is at least $y_0$ for sufficiently large $x$. This yields an upper bound on the range of all $z$ that  contribute to the infimum in \eqref{OCE dual rep eq}. Second, there is $y_1<1$ with $\varphi(y_1)<\infty$, which implies that the slope of $\ell$ is at most $y_1$ for sufficiently large negative $x$. This gives a lower bound on $z$. The continuity of $z\mapsto \mathbb E[\ell(Y+z)]-z$ now yields our claim.

Now let $z^*$ be such that $\rho(g_X(X))=\mathbb E[\ell(g_X(X)+z^*)]-z^*$. Then,
\begin{align*}
\inf_{g\in\mathcal G}\rho(g(X))&\le\inf_{g\in\mathcal G}\big(\mathbb E[\ell(g(X)+z^*)]-z^*\big)\le \mathbb E[\ell(g_X(X)+z^*)]-z^*=\inf_{g\in\mathcal G}\rho(g(X)).
\end{align*}
Hence, $g_X$ minimizes $\E[\ell(g(X)+z^*)]$ over $g\in\mathcal G$. We are thus in the context of Theorem \ref{th:r1-3} (optimizing expected loss), whose proof yields the form of $g_X$ as the minimizer of $\mathbb E[\ell(g(X)+z^*)]$ over $g\in\mathcal G$. Now suppose that $Z_n\in L^0_n$ are random variables whose laws converge to the one of $X$. By Skorokhod embedding, we may assume without loss of generality that $Z_n\to X$ holds $\mathbb P$-a.s.  The robustness of $\rho$ now follows as in the proof of Theorem \ref{th:r1-3} by using the fact that $\rho$ enjoys the so-called Lebesgue property, which in turn is a consequence of Exercises 4.2.3 and 4.3.4 in \cite{FS16}. \end{proof}

\begin{proof}[Proof of Corollary \ref{cor:ES}] First, by using the rightmost representation in \eqref{ES representations eq}, the existence of  a minimizer $g_X$ can be shown as in the proof of Theorem \ref{th:divergence risk measures} when replacing \eqref{budget constraint satisfied eq} with the following, more general argument. Our assumption that $\mathbb E[\gamma(X)|w(X)|]$ is finite yields with  Fatou's lemma that 
$$\mathbb E[\gamma(X)g_0(X)]\ge \limsup_{n\uparrow\infty}\mathbb E[\gamma(X)\widetilde g_n(X)]\ge \liminf_{n\uparrow\infty}\mathbb E[\gamma(X)g_n(X)]\ge x_0.
$$
Next, we will use the identity 
$$\ES_p(Y)=\min_{z\in\mathbb R}\bigg(\frac1{1-p}\mathbb E[(Y+z)_+]-z\bigg),\qquad Y\in L^1,
$$
where the minimum is attained at $z=F_Y^{-1}(1-p)$; see, e.g., Proposition 4.51 in \cite{FS16} and note that the proof given there works without modification  for $Y\in L^1$ and does not require the assumption $Y\in L^\infty$. Thus, we are in the setting of Theorem \ref{th:divergence risk measures}, and the remainder of the present proof follows exactly as for that result. All one needs to note in addition is that  $\ES_p$ is continuous on $L^1$.
\end{proof}

\begin{proof}[Proof of Theorem \ref{th:usr risk measures}] Let $\ell^*(y):=\sup_{x\in\mathbb R}(xy-\ell(x))$ be the Fenchel--Legendre transform of $\ell$. The risk measure $\rho$ can be represented in the form 
\begin{equation}\label{shortfall rep eq}\rho(Y)=\max_{\mathbb{Q}\ll \mathbb{P}}\bigg(\mathbb{E}_\mathbb{Q}[\,Y\,]-\inf_{\lambda>0}\frac1\lambda\bigg(x_0+
\mathbb{E}\Big[\,\ell^*\Big(\lambda\frac{d\mathbb{Q}}{d\mathbb{P}}\Big)\,\Big]\bigg)\bigg),\quad Y\in L^\infty;
\end{equation}
see  Theorem 4.115 in \cite{FS16}. Using this representation, the existence of a minimizer $g_X\in\mathcal G$ is established as in the proof of Theorem \ref{th:divergence risk measures}.

As shown at the beginning of the proof of Proposition 4.113 in \cite{FS16}, $z^*:=\rho(g_X(X))$ is the unique solution of the equation $\mathbb E[\ell(g_X(X)-z)]=x_0$. It follows from here that $g_X$ minimizes $\mathbb E[\ell(g(X)-z^*)]$ over $g\in\mathcal G$. Indeed, suppose by way of contradiction that there is $g_0\in\mathcal G$ for which $\mathbb E[\ell(g_0(X)-z^*)]<\mathbb E[\ell(g_X(X)-z^*)]$. Then the solution, $z_0=\rho(g_0(X))$, of the equation $\mathbb E[\ell(g_0(X)-z)]=x_0$ will be strictly smaller than $z^*$, a contradiction to the optimality of $g_X$.  The proof of Theorem \ref{th:r1-3} thus yields the structure of $g_X$ as a $\mu_X$-a.e.~continuous function. The robustness of $\rho$ now follows as in the proof of Theorem \ref{th:r1-3} by using the fact that $\rho$ enjoys the so-called Lebesgue property, which in turn is a consequence of Exercise 4.2.3 and Proposition 4.113 in \cite{FS16}.
\end{proof}

\begin{proof}[Proof of Corollary \ref{cor:expectile}] Let $\rho(Y)$ denote the expectile of $Y\in L^1$ and $\ell$ the convex loss function $\ell(x)=\tau x_+-(1-\tau) x_-$.  We have $\ell^*(y)=0$ if $1-\tau\le y\le \tau$ and $\ell^*(y)=+\infty$ otherwise. Hence, letting
\begin{equation}
\widetilde \rho(Y):=\sup\Big\{\mathbb E_{\mathbb Q}[Y]:\mathbb Q\ll\mathbb P\text{ and there exists $\lambda>0$ s.t. }1-\tau\le\lambda\frac{\d \mathbb Q}{\d\mathbb P}\le \tau\Big\},\quad Y\in L^1,
\end{equation}
the identity \eqref{shortfall rep eq} yields that $\rho(Y)=\widetilde\rho(Y)$ for $Y\in L^\infty$. For  $Y\in L^1$ and $n\in\mathbb N$, we let $Y_n:=(-n)\vee Y\wedge n$. Then we have $\rho(Y_n)=\widetilde\rho(Y_n)$. It is easy to see that $\rho(Y_n)\to\rho(Y)$. Moreover, the set of  densities of those probability measure $\mathbb Q\ll\mathbb P$ for which  there exists $\lambda>0$ such that  $1-\tau\le\lambda{\d\mathbb Q}/{\d\mathbb P}\le \tau$ is bounded in $L^\infty$. Therefore, Theorem 4.2 in \cite{CL09} implies that $\widetilde\rho$ is norm continuous on $L^1$. Consequently, $\widetilde \rho(Y_n)\to\widetilde\rho(Y)$, and we conclude that $\widetilde\rho(Y)=\rho(Y)$ for all $Y\in L^1$. Using this identity and the norm continuity of $\rho$, the robustness of $\rho$ now follows as in the proof of Theorem \ref{th:usr risk measures}. \end{proof}

\begin{proof}[Proof of Proposition \ref{pr:Sekine lemma1}]The result follows from  Theorem 8.26  of \cite{FS16}. The fact that  $r$ is a $p$-quantile of ${g_X(X)}$ is stated in the proof of Theorem 8.26  in \cite{FS16}.
\end{proof}

\subsection{Proofs in Section \ref{utility section}} \label{app:r1-3}

\begin{proof}[Proof of Theorem \ref{th:r1-3}] Consider the function
\begin{equation}\label{genoptvfctnmdefieq}
\ell^*(z,x):=\sup_{v(x)\le y\le w(x)} \big(yz-\ell(y)\big),
\end{equation}
defined for $y\in\mathbb{R}$ and $x\in\mathbb {R}^n$. Let $y^*(x,z)$ denote the largest maximizer. We must have
\begin{equation}\label{}
\begin{split}
y^*(x,z)=v(x)\quad&\iff\quad \ell^\prime_-(y)> z\quad\text{for all
$y\in\big(v(x),w(x)\big]$,}\\
y^*(x,z)=w(x)\quad&\iff\quad \ell^\prime_-(y)\le z\quad\text{for all
$y\in\big(v(x),w(x)\big]$.}
\end{split}
\end{equation}
Moreover, $\ell_-'(y^*(x,z))\le z\le \ell_+'(y^*(x,z))$ if $y^*(x,z)\in\big(v(x),w(x)\big) $ (see, e.g., Proposition A.9 (a) in \cite{FS16}). Letting $I(z):=\inf\{y:\ell'_-(y)>z\}=\sup\{y:\ell'_-(y)\le z\}$ denote the right-continuous generalized inverse function of $\ell'_-$,  we hence see that $y^*(x,z)=I(z)$ in the latter case. Altogether, we obtain that $y^*(x,z)=v(x)\vee I(z)\wedge w(x)$.

Let us define
\begin{equation}\label{gc solution eq}
g^{(c)}(x):=v(x)\vee I(c\gamma(x))\wedge w(x),\quad x\in\mathbb R^n,\ c\in\mathbb R.
\end{equation}
The function $I$ is nondecreasing and hence has at most countably many jumps, which form a $\mu_X\circ\gamma^{-1}$-nullset, due to our Assumption \ref{assm:gamma}. Due to our assumption $\mathbb E[\gamma(X)|w(X)|]<\infty$, we may apply the monotone convergence theorem, which yields that the function $c\mapsto\mathbb E[\gamma(X)g^{(c)}(X)]$   decreases continuously from $\mathbb E[\gamma(X)w(X)]>x_0$ to $K:=\mathbb E[\gamma(X)(v(X)\vee I(0)\wedge w(X))]$ as $c$ decreases from $+\infty$ to $0$. Let us first consider the case in which $K< x_0$. In this case, there is some $c^*>0$ for which $\mathbb E[\gamma(X)g^{(c^*)}(X)]=x_0$. We show now that $g_X:=g^{(c^*)}$ is optimal. Indeed, from \eqref{genoptvfctnmdefieq}
 and our definition of $g_X$, it is clear that for arbitrary $g\in\mathcal G$,
 \begin{equation}\label{ell*ineq}
 c^*\gamma(X)g_X(X)-\ell(g_X(X))=\ell^*(c^*\gamma(X),X)\ge c^*\gamma(X)g(X)-\ell(g(X)).
 \end{equation}
Taking expectations on both sides of \eqref{ell*ineq} and using that $\mathbb E[\gamma(X)g(X)]\ge x_0$ hence yields that $\mathbb E[\ell (g_X(X))]\le \mathbb E[\ell (g(X))] $, which is the desired optimality. 
 
 Let us now turn to the case in which $K\ge x_0$. To this end,
 consider $a:=\inf_y\ell'_-(y)\ge0$ and $b:=\sup_y\ell'_-(y)\in[ a,\infty]$. Then $I(z)=-\infty$ for $z<a$ and $I(z)=+\infty$ for $z\ge b$. Moreover, $I(a)=\lim_{z\downarrow a}I(z)$ is finite if and only if $\ell$ is linear  on $(-\infty,I(a)]$ with slope $a$. Since $K\ge x_0$ can only occur if $I(0)$ is finite and we clearly have $I(0)\le I(a)$, it follows that $\ell$ is  linear on $(-\infty,I(a)]$  and $I(0)=I(a)$. On the other hand, the slope of $\ell$ on $(I(a),\infty)$ will be greater than  $a$. Therefore, any function $g\in\mathcal G$ taking values greater than $v\vee I(a)$ with positive $\mu_X$-probability must be suboptimal, provided that we can solve the following auxiliary problem:
 \begin{equation}\label{expected loss aux problem 1 eq}
 \text{minimize $\mathbb E[\ell(g(X))]$ over  $g\in\mathcal G_n$ with $ v\le g\le v\vee I(a)\wedge w$ and $\mathbb E[\gamma(X)g(X)]\ge x_0$.}
 \end{equation}
 If $a=0$, so that $\ell$ is flat on $(-\infty,I(a)]$, then every $g$ satisfying the constraints in \eqref{expected loss aux problem 1 eq}
 will be optimal. For instance, we can take
 \begin{equation}\label{trivial gX eq}
 g_X:=f\quad\text{for}\quad f:=v\vee I(a)\wedge w.
 \end{equation}
 If $a>0$, then we let $h:=f-v$ and replace $g$ in \eqref{expected loss aux problem 1 eq}
 with $f-g$. Then \eqref{expected loss aux problem 1 eq}
 is equivalent to the auxiliary problem, \begin{equation}\label{expected loss aux problem 2 eq}
 \text{maximize $\mathbb E[g(X)]$ over $g\in\mathcal G_n$ with $0\le g\le h$ and $\mathbb E[\gamma(X)g(X)]\le K-x_0$.} 
 \end{equation}
 If $K=x_0$, this problem has only the trivial solution $g\equiv 0$, and so \eqref{trivial gX eq}  is clearly the $\mu_X$-a.s.~unique solution to \eqref{expected loss aux problem 1 eq}.  
 For $K>x_0$, we choose $c_0>0$ such that $\mathbb E[\gamma(X)\id_{\{\gamma(X)\le c_0\}}h(X)]=K-x_0$; this is possible, because, by way of the linearity of $\ell$ on $(-\infty,I(a)]$, our assumption that both $\mathbb E[\ell(v(X))]$ and $\mathbb E[\gamma(X)|w(X)|]$ are finite implies that $\mathbb E[\gamma(X)\id_{\{\gamma(X)\le c\}}h(X)]$ is a finite and continuous function of $c\in\mathbb R$. Now define $g^*:=h\id_{\{\gamma\le c_0\}}$ and take any other $g\in\mathcal G_n$ satisfying the constraints in \eqref{expected loss aux problem 2 eq}. Then we have $(\gamma-c_0)(g-g^*)\ge0$ and hence
 \begin{align*}
 0&\le \mathbb E[(\gamma(X)-c_0)(g(X)-g^*(X))]\le -c_0\big(\mathbb E[g(X)]-\mathbb E[g^*(X)]\big).
 \end{align*}
This shows that $g^*$ solves \eqref{expected loss aux problem 2 eq}. It follows that 
\begin{equation} \label{gX lin solution eq}
g_X:=f-g^*=(v\vee I(a)\wedge w)\id_{\{\gamma>c_0\}}+v\id_{\{\gamma\le c_0\}}
\end{equation}
 solves our original problem in case $K>x_0$.

To summarize, our original optimization problem admits a solution $g_X$ that has one of the forms \eqref{gc solution eq}, \eqref{trivial gX eq}, or \eqref{gX lin solution eq}.
With this minimizer at hand, we can now proceed to prove the asserted robustness. So suppose that $Z_k\to X$ in $L^{pq^+}_n$. Since the functions $v$, $w$, and $\gamma$ are continuous $\mu_X$-a.e.~and since $I$ has at most countably many discontinuities, we have  $\ell(g_X(Z_k))\to \ell(g_X(X))$ in $L^0$. Moreover,
\begin{align*} |\ell(g_X(Z_k))|& \le |\ell_-(v(Z_k))|+|\ell_+(w(Z_k)))|
\\& \le c_1(1+|Z_k|^{rq^-})+c_2(1+|Z_k|^{pq^+})\le c_3(1+|Z_k|^{pq^+}).
\end{align*}
It follows that the sequence $|\ell(g_X(Z_k))|$ is uniformly integrable and so  $\mathbb E[\ell (g_X(Z_k))]\to \mathbb E[\ell (g_X(X))]$. This is the asserted robustness. \end{proof}

\subsection{Proofs in Section \ref{sec:6a}}

To prove Proposition \ref{prop:var-rob-new}, we need the following two lemmas.
In what follows, we denote by $A_\epsilon=\{x\in \R:  |x-y|\le \epsilon~\mbox{for some~}y \in A\}$ for a set $A\subset\R$ and $\epsilon>0$.

\begin{lemma}\label{lem:rob-opt(i)}If $A\subset\R$ is either compact or an interval, then $A_\epsilon\in\mathscr B(\R)$ and $$\sup_{\pi^\infty_1({Y},X)\le \epsilon} \p ({Y}\in A)=\p(X\in A_\epsilon) .$$
\end{lemma}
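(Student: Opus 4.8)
The plan is to prove the two assertions separately: first that $A_\epsilon$ is Borel, then the supremum formula.

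For measurability, note that $A_\epsilon=\bigcup_{y\in A}[y-\epsilon,y+\epsilon]$. If $A$ is an interval, say with endpoints $a\le b$ (possibly infinite), then $A_\epsilon$ is simply the interval with endpoints $a-\epsilon$ and $b+\epsilon$ (with the appropriate handling of open/closed endpoints and of $\pm\infty$), hence Borel. If $A$ is compact, then $A_\epsilon$ is again compact: it is the image of the compact set $A\times[-\epsilon,\epsilon]$ under the continuous map $(y,t)\mapsto y+t$, hence closed and bounded, hence Borel. (In fact, a one-line alternative covering both cases: $x\mapsto \mathrm{dist}(x,A)$ is continuous when $A$ is closed, and $A_\epsilon=\{x:\mathrm{dist}(x,A)\le\epsilon\}$ is then closed; for an open interval a direct description suffices.)

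For the supremum identity, I would argue by two inequalities. For ``$\le$'': if $Y$ satisfies $\pi^\infty_1(Y,X)\le\epsilon$, i.e.\ $\|X-Y\|_\infty\le\epsilon$, then on the event $\{Y\in A\}$ we have, $\p$-almost surely, $|X-Y|\le\epsilon$, so $X$ lies within distance $\epsilon$ of the point $Y\in A$, i.e.\ $X\in A_\epsilon$. Hence $\{Y\in A\}\subset\{X\in A_\epsilon\}$ up to a null set, giving $\p(Y\in A)\le\p(X\in A_\epsilon)$, and taking the supremum over admissible $Y$ yields ``$\le$''. For ``$\ge$'', I need to construct, for each $\eta>0$, an admissible $Y$ with $\p(Y\in A)\ge\p(X\in A_\epsilon)-\eta$ — or better, a $Y$ attaining $\p(X\in A_\epsilon)$ exactly. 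The idea is: for each $x\in A_\epsilon$ choose a point $\varphi(x)\in A$ with $|x-\varphi(x)|\le\epsilon$ (this is where measurability of the selection must be checked), set $Y=\varphi(X)$ on $\{X\in A_\epsilon\}$ and $Y=X$ off that event. Then $\|X-Y\|_\infty\le\epsilon$ and $\{X\in A_\epsilon\}\subset\{Y\in A\}$, so $\p(Y\in A)\ge\p(X\in A_\epsilon)$.

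The main obstacle is producing a \emph{measurable} selection $\varphi:A_\epsilon\to A$ with $|x-\varphi(x)|\le\epsilon$. When $A$ is an interval this is trivial: one can take $\varphi(x)$ to be the nearest point of $\overline A$ to $x$ (the metric projection onto a closed interval, which is explicit and continuous), adjusting by an arbitrarily small amount if $A$ is not closed, which is harmless after taking a sup over $\eta\to0$. When $A$ is compact, the nearest-point projection onto $A$ again works: $x\mapsto\mathrm{dist}(x,A)$ is continuous, and one can select a measurable nearest point (e.g.\ the infimum of the nonempty compact set $\arg\min_{y\in A}|x-y|$ is measurable in $x$, or invoke the Kuratowski--Ryll-Nardzewski measurable selection theorem). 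This nearest point satisfies $|x-\varphi(x)|=\mathrm{dist}(x,A)\le\epsilon$ precisely when $x\in A_\epsilon$. With $\varphi$ in hand the rest is the routine verification above. I would also remark that the constructed $Y$ may need to be defined on a possibly enlarged probability space only if one insists on $Y$ being a genuine random variable on $(\Omega,\mathcal F,\p)$ — but since $\varphi(X)$ is already $\sigma(X)$-measurable, no enlargement is needed here.
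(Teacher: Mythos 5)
Your argument is essentially the paper's: the inequality $\le$ via the pathwise inclusion $\{Y\in A\}\subset\{X\in A_\epsilon\}$, and the inequality $\ge$ via $Y=\varphi(X)$ on $\{X\in A_\epsilon\}$ and $Y=X$ elsewhere, with $\varphi$ a measurable (near-)nearest-point selection; the paper likewise uses the lower nearest point for compact $A$ (lower semicontinuous, hence measurable), so no selection theorem is needed, and your measurability argument for $A_\epsilon$ matches its one-line claim.

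The only spot that needs tightening is the non-closed interval case. As written, ``take the nearest point of $\overline A$ and adjust by an arbitrarily small amount'' is not yet admissible: if you shift the projection into $A$ by a fixed $\delta>0$, then for $x\in A_\epsilon$ with $\mathrm{dist}(x,\overline A)>\epsilon-\delta$ you get $|x-\varphi(x)|>\epsilon$, so the resulting $Y$ violates $\pi^\infty_1(Y,X)\le\epsilon$ on that thin set. You can repair this either by capping the shift (only move as far as the $\epsilon$-ball around $x$ allows) or by mapping only $\{X\in A_{\epsilon-\delta}\}$ into $A$ and letting $\delta\to0$, using that for an interval $A_\epsilon=\bigcup_{\delta>0}A_{\epsilon-\delta}$; your ``sup over $\eta\to0$'' framing shows you anticipated this, but the construction itself must respect the $\epsilon$-constraint. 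The paper avoids the limiting argument altogether by fixing an interior point $a$ of the interval and setting $f_A(x)=a\vee(x-\epsilon)$ for $x\ge\sup A$ and $f_A(x)=a\wedge(x+\epsilon)$ for $x\le\inf A$ (and $f_A(x)=x$ otherwise), which lands in $A$, satisfies $|x-f_A(x)|\le\epsilon$ for every $x\in A_\epsilon$, and hence attains $\p(Y\in A)=\p(X\in A_\epsilon)$ exactly for every type of interval; aside from this detail, the two proofs coincide.
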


\begin{proof} If $A$ is a compact set or an interval, then so is $A_\epsilon$, which proves $A_\epsilon\in\mathscr B(\R)$. Next, for any $Y\in L^\infty$ with $\pi^\infty_1({Y},X)\le \epsilon$, the condition
 $Y\in A$ implies $X\in A_\epsilon$ a.s.
 Therefore, $\p(Y\in A)\le \p(X\in A_\epsilon)$, leading to $\sup_{\pi^\infty_1({Y},X)\le \epsilon} \p ({Y}\in A)\le \p(X\in A_\epsilon) $.
 To show the opposite direction of the inequality, it suffices to take
 $Y=  f_A(X)\id_{\{X\in A_\epsilon\}} +X\id_{\{X\not \in A_\epsilon\}}$,
 where, for a compact set $A$,  $f_A(x)$ is a nearest point of $x$ in $A$ (to be precise, there can be two such nearest points; by taking $f_A(x)$ to be the lower of the two, $f_A$ becomes lower semicontinuous and, hence, measurable). In the case in which $A$ is a nondegenerate interval, we fix a point $a$ in the interior of the interval and  let
 $$f_A(x)=\begin{cases} a\vee(x-\epsilon)&\text{if $x\ge \sup A$,}\\
 a\wedge(x+\epsilon)&\text{if $x\le \inf A$,}\\
 x&\text{otherwise.}
 \end{cases}
 $$
 In both cases, $|Y-X|\le \epsilon$, and $\p(Y\in A)=\p(X\in A_\epsilon)$, leading to the desired result.
\end{proof}

\begin{lemma}\label{lem:rob-opt} Let $\epsilon>0$, $p\in [1/2,1)$, and suppose that $X$ satisfies Assumption \ref{assm:6}. If
 $A\subset\R$ is a compact set or an interval satisfying $\sup_{\pi^\infty_1({Y},X)\le \epsilon}\p(Y\in A)\le 1-p$, then
$$\p(X> \VaR_{p}(X) +\epsilon )\ge \p(X\in A).$$ \end{lemma}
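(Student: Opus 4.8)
The plan is to peel off the easy reductions and then prove a single statement: an interval whose $\epsilon$-enlargement has mass at most $1-p$ must lie essentially to the right of $\VaR_p(X)+\epsilon$. First I would apply Lemma~\ref{lem:rob-opt(i)}: since $A$ is compact or an interval, the hypothesis is equivalent to $\p(X\in A_\epsilon)\le 1-p$, and that is the only use of the supremum. Write $F=F_X$, which by Assumption~\ref{assm:1} is continuous and strictly increasing on the support $[\ell,u]$ of $X$ (with $\ell\ge 0$), and set $c=\VaR_p(X)=F^{-1}(p)$, so $F(c)=p$. From Assumption~\ref{assm:6} only the clauses ``$p\ge 1/2$'' and ``the density of $X$ is decreasing on $[\ell,u]$'' will be used; the goal is $\p(X\in A)\le 1-F(c+\epsilon)$, and we may assume $\p(X\in A)>0$.

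The next step reduces everything to the case that $A_\epsilon$ is a single interval. If $A$ is compact, $A_\epsilon$ is a finite union of pairwise disjoint closed intervals $[u_1,v_1],\dots,[u_k,v_k]$, each of length $\ge 2\epsilon$; for $y\in A$ the connected set $[y-\epsilon,y+\epsilon]$ lies in exactly one of them, so $A\cap[u_i,v_i]\subseteq[u_i+\epsilon,v_i-\epsilon]$, and therefore
\begin{equation*}
\p(X\in A)\le\sum_{i=1}^k\big(F(v_i-\epsilon)-F(u_i+\epsilon)\big),\qquad \sum_{i=1}^k\big(F(v_i)-F(u_i)\big)=\p(X\in A_\epsilon)\le 1-p .
\end{equation*}
Components whose inner interval $[u_i+\epsilon,v_i-\epsilon]$ misses the support contribute $0$ to the left sum and are discarded; for the survivors $v_i-\epsilon\ge\ell$ and $u_i+\epsilon\le u$ (if none survives, $\p(X\in A)=0$ and we are done). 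I would then merge the survivors one at a time: replacing two consecutive intervals by a single $[u_i,V]$ whose mass equals their combined mass leaves the constraint untouched, and a short computation---the mass gained just to the left of $V$ dominates, by the decreasing density and $v_i-\epsilon\ge\ell$, the mass of the absorbed component, which in turn dominates that of its inner interval---shows the left sum does not decrease. This brings us to $k=1$, i.e.\ $A=[a,b]$ with $\p(X\in[a-\epsilon,b+\epsilon])\le 1-p$; the case where $A$ is itself an interval is of this form (or, for an unbounded interval, even more direct).

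It remains to show: for a closed interval $A=[a,b]$ with $b\ge\ell$ and $\p(X\in[a-\epsilon,b+\epsilon])\le 1-p$, one has $\p(X\in A)\le 1-F(c+\epsilon)$. I would split on the position of $a-\epsilon$. If $a-\epsilon\ge c$, then $a\ge c+\epsilon$ and $\p(X\in A)\le\p(X>a)\le 1-F(c+\epsilon)$. If $a-\epsilon<c$, then from $\p(X\in A)=\p(X\in A_\epsilon)-\big(F(a)-F(a-\epsilon)\big)-\big(F(b+\epsilon)-F(b)\big)$ and $\p(X\in A_\epsilon)\le 1-p$ it suffices to prove that the $\epsilon$-collar of $A$ carries at least the mass of $(c,c+\epsilon]$, namely $\big(F(a)-F(a-\epsilon)\big)+\big(F(b+\epsilon)-F(b)\big)\ge F(c+\epsilon)-F(c)$. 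If $a-\epsilon\ge\ell$, then $[a-\epsilon,a]$ lies in the support, has length $\epsilon$, and has left endpoint $a-\epsilon\le c$, so the decreasing density gives $F(a)-F(a-\epsilon)\ge F(c+\epsilon)-F(c)$ at once. If $a-\epsilon<\ell$, then $F(a-\epsilon)=0$, hence $F(b+\epsilon)=\p(X\in A_\epsilon)\le 1-p\le\tfrac12\le p=F(c)$, which forces $b+\epsilon\le c$; then $[b,b+\epsilon]\subseteq[\ell,c]$ lies to the left of $[c,c+\epsilon]$, so $F(b+\epsilon)-F(b)\ge F(c+\epsilon)-F(c)$. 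Either way the collar inequality holds.

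The conceptual core is small---the decreasing density lets one compare the masses of equal-length $\epsilon$-collars purely by position, and $p\ge\tfrac12$ places $c$ at or past the median, so that a set of probability at most $1-p$ cannot reach beyond $c$. I therefore expect the real work to be bookkeeping: the boundary cases where intervals protrude from $[\ell,u]$, the discarding and merging of degenerate components in the compact case, and the harmless endpoint adjustments allowed by the atomless continuity of $F$; verifying that the merging step genuinely preserves the constraint while not lowering the relevant inner mass is the fiddliest point.
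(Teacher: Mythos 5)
Your proposal is correct, but it is organized rather differently from the paper's proof, so a comparison is worthwhile. Writing $F=F_X$ and $\ell=\essinf X$: you reduce via Lemma \ref{lem:rob-opt(i)} to $\p(X\in A_\epsilon)\le 1-p$, decompose $A_\epsilon$ into finitely many components of length at least $2\epsilon$, merge them by a mass-preserving rearrangement (your chain $F(V-\epsilon)-F(v_i-\epsilon)\ge F(V)-F(v_i)=F(v_{i+1})-F(u_{i+1})\ge F(v_{i+1}-\epsilon)-F(u_{i+1}+\epsilon)$ is exactly what is needed and is valid under the survivor condition $v_i-\epsilon\ge\ell$), and then run a two-collar comparison on a single interval, with $p\ge 1/2$ handling the case where the interval's $\epsilon$-enlargement reaches below the support. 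The paper instead enlarges $A$ so that $\p(X\in A_\epsilon)=1-p$ exactly, sets $x=\inf (A_\epsilon)$ and shows $x\le \VaR_p(X)$, and then exhibits a \emph{single} $\epsilon$-length subinterval of $A_\epsilon\setminus A$ lying to the left of $\VaR_p(X)+\epsilon$: the left collar $(x,x+\epsilon)$ when $x>\essinf X$, or a gap $(x_1-\epsilon,x_1)$ of $A_\epsilon$ when $x\le \essinf X$ (this is where $p\ge 1/2$ enters), whose mass dominates $\p\big(X\in(\VaR_p(X),\VaR_p(X)+\epsilon)\big)$ by the decreasing density; subtracting this from $\p(X\in A_\epsilon)\le 1-p$ gives the claim. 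The core mechanism is identical in both arguments; your version avoids the enlargement step but pays for it with the component-merging bookkeeping, which is in fact dispensable: only one $\epsilon$-length piece of $A_\epsilon\setminus A$ positioned left of $\VaR_p(X)+\epsilon$ is ever needed, so applying your collar argument directly to the leftmost surviving component (or, as the paper does, to $\inf(A_\epsilon)$ and the first gap) already suffices.
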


\begin{proof} By letting
 $A^*=(\VaR_{p}(X) +\epsilon ,\infty)$, the assertion can be rewritten as $\p(X\in A^*)\ge \p(X\in A)$. By Lemma \ref{lem:rob-opt(i)},  we have
 \begin{equation}\label{eq A*eps}
 1-p=\p(X\in A^*_\epsilon)= \sup_{\pi^\infty_1({Y},X)\le \epsilon} \p ({Y}\in A^*).
 \end{equation}
 If $\p(X\in A_\epsilon)<1-p$, we can  enlarge $A$ to obtain $\p(X\in A_\epsilon)=1-p$.
Then  $x:=\inf (A_\epsilon)$ satisfies $x\le \VaR_p(X)$ since $\p(X\le x)\le 1-\p(X\in A_\epsilon)=p.$

We consider two cases separately. First, we assume $x>\essinf X$.
It is from the definition of $x$ that $\inf (A)=x+\epsilon$.
Hence, $(x,x+\epsilon)\subset A_\epsilon \setminus A$.
Also note that $\p(X\in (x,x+\epsilon))\ge \p(X\in (\VaR_p(X),\VaR_p(X)+\epsilon))$ since $X$ has a decreasing density and $x\le \VaR_p(X)$.
Therefore, we have
\begin{align*}
\p(X\in A)&= \p(X \in A_\epsilon)-\p(X\in A_\epsilon \setminus A)
\\ & \le  1-p-\p(X\in (x,x+\epsilon))
\\ & \le 1-p -  \p(X\in (\VaR_p(X),\VaR_p(X)+\epsilon)) = \p(X\in A^*).
\end{align*}
Next, we assume $x\le \essinf X$.
Since $p\in [1/2,1)$, we have $\p(X< \VaR_p(X)+\epsilon)) >p \ge 1-p$.
Because $\p(X\in A_\epsilon)=1-p$ and $x\le \essinf X$, there exists $x_0\in (x,\VaR_p+\epsilon)$ such that $x_0\not \in A_\epsilon$.
Let $x_1=\sup \{y<x_0: y\in A_\epsilon \}$. By the definition of $A_\epsilon$ and $x_1$, we have $x_1-\epsilon>x$  and $(x_1-\epsilon,x_1)\subset A_\epsilon \setminus A$.
Using a similar argument as in  the first case, we have
\begin{align*}
\p(X\in A)&= \p(X \in A_\epsilon)-\p(X\in A_\epsilon \setminus A)
\\ & \le  1-p-\p(X\in (x_1-\epsilon,x_1))
\\ & \le 1-p -  \p(X\in (\VaR_p(X),\VaR_p(X)+\epsilon)) = \p(X\in A^*).
\end{align*}
We conclude that, in both cases, $\p(X\in A^*)\ge \p(X\in A)$. \qedhere
\end{proof}

\begin{proof}[Proof of Proposition \ref{prop:var-rob-new}] 
Recall that $\mathcal G$ is given by \eqref{eq:DRO-g},   
and $\E_X[h]$ means $\E[h(X)]$ for any function $h$. 
Take an arbitrary $g\in \mathcal G$.
Denote by  $$a=\sup_{\pi^\infty_1({Y},X)\le \epsilon} \VaR_p(g({Y})),$$
and let $h$ be given by
\begin{equation}\label{eq:new-h}
h(x)=
m\id_{\{g(x)>a\}}  + a \id_{\{g(x)\le a\}},~~x\in \R.
\end{equation}
For all ${Y}\in L^\infty$ with $\pi^\infty_1({Y},X)\le \epsilon$, we have $\VaR_p(g({Y}))\le a$.
Therefore, $\p (g({Y})>a)\le 1-p $, which implies $\VaR_p(h({Y}))\le a$.
Thus,
$$\sup_{\pi^\infty_1({Y},X)\le \epsilon} \VaR_p(h({Y}))= a.$$
Note that $h\in \mathcal G$ since $m\ge h(x)\ge g(x)\ge 0$, $x\in \R$.
Therefore, for any
$g\in \mathcal G $, we can find some $h\in \G$ of the form \eqref{eq:new-h}, such that
$$\sup_{\pi^\infty_1({Y},X)\le \epsilon} \VaR_p(h({Y})) = \sup_{\pi^\infty_1({Y},X)\le \epsilon} \VaR_p(g({Y})).$$
As a consequence, it suffices to search for optimizers $h\in \mathcal G$ of the  form \eqref{eq:new-h}. Moreover, for such an $h$, we have $\E_X[\gamma  h ]=m\mathbb Q(g(X)>a)+a\mathbb Q(g(X)\le a)$, where $\mathbb Q$ is given by $\d\mathbb Q/\d \mathbb P=\gamma$.  Due to the inner regularity of the law $\mathbb Q\circ X^{-1}$, we can find, for any $a'>a$, a compact set $K\subset\{g(X)>a\}$ such that
$h'(x)=m\id_{\{x\in K\}}+a'\id_{\{x\in K^c\}}
$
satisfies $\E_X[\gamma h']\ge \E_X[\gamma h  ]\ge x_0$. Since $\p(Y\in K)\le \p(g(Y)>a)\le 1-p$ for all ${Y}\in L^\infty$ with $\pi^\infty_1({Y},X)\le \epsilon$, we  have
$$\sup_{\pi^\infty_1({Y},X)\le \epsilon}\VaR_p(h'({Y}))= a'.$$
Let us denote by $\mathscr{K}$ the class of all compact set $K\subset\mathbb R$ satisfying
 $\p(Y\in K)\le1-p$ for all ${Y}\in L^\infty$ with $\pi^\infty_1({Y},X)\le \epsilon$. The above argument shows that $\mathscr K$ is not empty.
 Define a function \begin{equation}\label{eq:new-h2}
h_{K}(x)=m\id_{\{x\in K\}}+a_K\id_{\{x\in K^c\}},~~x\in \R,
\end{equation}
where $a_K\in \R$ is such that $\E_X[\gamma h_K]= x_0$.
The existence of $a_K$ is guaranteed by $\p(X\in K^c)\ge p>0$.
Note that $0<a_K<m$ since $x_0<m$ and $q>0$.

  The preceding argument shows that it is  sufficient to construct a function $h^*$ such that
$\E[\gamma h^*(X)]=x_0$ and
\begin{equation}\label{eq:h*optimality condition}
\sup_{\pi^\infty_1({Y},X)\le \epsilon} \VaR_p(h^*({Y})) \le \sup_{\pi^\infty_1({Y},X)\le \epsilon} \VaR_p(h_K({Y}))=a_K,
\end{equation}
for all  $K\in\mathscr{K}$.
  We define $h^*$ by
$$
h^*(x)= m\id_{\{x>c+\epsilon\}}  + a^* \id_{\{x\le c+\epsilon\}},~~x\in \R,
$$
where $a^*\le m$ is such that  $\E_X[\gamma h^*  ]=x_0$. Now let $K\in\mathscr K$ be given and $h_K$ of the form \eqref{eq:new-h2}. We take $k\in\mathbb R$ such that $\p(X>k)=\p(X\in K)$.
Lemma \ref{lem:rob-opt} gives  $$\p(X>c+\epsilon)\ge \p(X\in K)= \p(X>k)$$ and hence  $k\ge c+\epsilon$. Moreover, since $\gamma$ is an increasing function of $X$,   the upper Hardy--Littlewood inequality, in the form of \citet[Theorem A.28]{FS16},  yields
$$x_0\le \E_X [\gamma h_K ]\le\E[\gamma(X) (m\id_{\{X>k\}}+a_K\id_{\{X\le k\}})]\le \E[\gamma(X) (m\id_{\{X>c+\epsilon\}}+a_K\id_{\{X\le c+\epsilon\}})].
$$
Our condition  $\E_X [\gamma h^* ]=x_0$ therefore yields $a_K\ge a^*$. Since, moreover, by construction,
$$\sup_{\pi^\infty_1({Y},X)\le \epsilon} \VaR_p(h^*({Y})) =a^*\quad\text{and}\quad  \sup_{\pi^\infty_1({Y},X)\le \epsilon} \VaR_p(h_K({Y}))=a_K,$$
we conclude that \eqref{eq:h*optimality condition} holds and that $h^*$ is hence a solution to Problem \eqref{eq:robopt2}.
\end{proof}

\subsection{Robustness of VaR and ES in an unbounded setting}
\label{app:unbounded}
For $\rho=\VaR_p$ or $\rho=\ES_p$, we consider the unbounded optimization problem
\begin{equation}
\label{eq:r1-unbounded}
\mbox{to minimize: } \rho(g(X)) \mbox{~~~~~subject to~} 
g\in \mathcal G_n,~~\E[ \gamma (X) g(X)]\ge x_0,
 \end{equation} 
 where $\gamma:\mathbb R^n\to(0,\infty)$ and $x_0\in\mathbb R$.
Problem \eqref{eq:r1-unbounded} corresponds to \eqref{eq:general-opt} with $w=\infty$ and $v=-\infty$.

\begin{proposition}\label{prop:r1-unbounded}
Assume that $X$ is continuously distributed, $\E[\gamma(X)]<\infty$,  and $p\in (0,1)$. 
\begin{enumerate}[(i)]
\item For $\rho=\VaR_p$, Problem \eqref{eq:r1-unbounded} has no solution.
\item For $\rho=\ES_p$,
Problem \eqref{eq:r1-unbounded} admits a solution  if and only if  
\begin{equation}\label{eq:es-cm-ex}
\esssup \gamma(X)\le \frac1{1-p}.
\end{equation}
Moreover, if \eqref{eq:es-cm-ex} holds,  a solution to \eqref{eq:r1-unbounded} is given by the constant function $  g_X (\cdot)=x_0$.
\end{enumerate}
\end{proposition}

\begin{proof}
Denote by $$\mathcal G_{\rm ub} = \{g\in \mathcal G_n: \E[ \gamma (X) g(X)]\ge x_0\}.$$  \begin{enumerate}[(i)]
\item
Let $A$ be a set such that $\p(X\in A)=1-p$. 
Write $\lambda = \E[\gamma(X) \id_{\{X\in A\}}]>0$.
For $d< x_0$, define the function $$g_{d}(x)= d+\frac{x_0-d}{\lambda} \id_{\{x\in A\}},~~x\in \R^n. $$
Clearly, $g_d(X)\in \mathcal G_{\rm ub}$ because
$\E[\gamma (X) g_{d}(X)] = d+  \frac{x_0-d }{\lambda}\E[\gamma (X) \id_{\{X\in A\}} ]= x_0.
$
On the other hand, $\VaR_p(g_{d}(X))=d.$ Letting $d\to -\infty$,
$$\VaR_p(X;\mathcal G_{\rm ub})=\inf\{\rho(\VaR_p(g(X)): g\in \mathcal G_{\rm ub}\} =-\infty,$$
and hence  \eqref{eq:r1-unbounded} does not have an optimizer. 

\item By the dual representation of $\ES_p$ in \eqref{ES representations eq}, we have
$$\ES_p(Y)=\sup_{B\in \mathcal B_p} \E[B Y] ~~\mbox{for all }Y\in L^1,$$
where $\mathcal B_p= \{B\in L^\infty: \E[B]=1,~0\le B\le \frac{1}{1-p}\}.$
If $\esssup \gamma(X)\le \frac1{1-p}$, then  $\gamma(X) \in \mathcal B_p$, and hence
for any $g\in \G_{\rm ub}$, $\ES_p(g(X))\ge \E[\gamma(X) g(X)]\ge x_0$.
Clearly, taking the constant function $ g_X (\cdot)=x_0$ we have $ g_X \in \mathcal G_{\rm ub}$
and $\ES_p({g_X(X)})=x_0$.
Therefore, $ g_X $ is a solution to Problem \eqref{eq:r1-unbounded}.

Next, assume $\esssup \gamma(X)> \frac1{1-p}$.
Denote by $y=\E[\gamma (X)\id_{\{\gamma(X)>\frac{1}{1-p}\}}]>0$ and $k=\ES_p (\id_{\{{\gamma(X)}>\frac{1}{1-p}\}})$.
Note that $k\le y$ because \begin{align*}
\ES_p\left(  \id_{\{{\gamma(X)}>\frac{1}{1-p}\}}\right)&= {\sup_{B\in \mathcal B_p} \E[B \id_{\{{\gamma(X)}>\frac{1}{1-p}\}} ]} \\& \le  {\frac{1}{1-p}\E[ \id_{\{{\gamma(X)}>\frac{1}{1-p}\}} ]}<{ \E[\gamma (X)  \id_{\{{\gamma(X)}>\frac{1}{1-p}\}}]} .
\end{align*}
For $\lambda>0$,
take $g_\lambda (x)={\lambda }\id_{\{\gamma(x)>\frac{1}{1-p}\}} - \lambda y + x_0$, $x\in \R^n$.
It is clear that $\E[\gamma(X) g_\lambda (X)] =\lambda y -\lambda y + x_0=x_0$, and hence $g_\lambda \in \mathcal G_{\rm ub}$.
We can calculate
$$\ES_p(g_\lambda(X))- \E[\gamma(X) g_\lambda(X)] =
\lambda\left( \ES_p\left(\id_{\{\gamma(X)>\frac{1}{1-p}\}}\right) - y \right)=\lambda (k-y).
 $$
Letting $\lambda \to \infty$, we get
$$
\inf\{\ES_p(g(X)): g\in \mathcal G_{\rm ub}\}=-\infty,
$$
and hence there is no solution to Problem \eqref{eq:r1-unbounded}. \qedhere
\end{enumerate}
\end{proof}

As a direct consequence of Proposition \ref{prop:r1-unbounded},
 for any choice of   $(\mathcal Z, \pi)$,
 $\VaR_p$ is not robust against optimization for $(\mathcal G_{\rm ub},\mathcal Z,\pi)$, and if \eqref{eq:es-cm-ex}  holds, then $\ES_p$ is robust against optimization for $(\mathcal G_{\rm ub},\mathcal Z,\pi)$.  Clear from the proof, the assumption that $X$ has a continuous distribution is only used in part (i), and it can be relaxed to requiring $\p(X\in A)\in (0,1-p)$ for some event $A$.

\small

  \end{document}